\newcommand{\R}{\mathbb{R}}
\newcommand{\C}{\mathbb{C}}
\newcommand{\id}{\mathbbm{1}}
\newcommand{\diag}{{\rm diag} }
\newcommand{\dom}{{\rm dom}}
\newcommand{\vv}[1]{{\mathbf{#1}}}
\newtheorem{theorem}{Theorem}[section] 
\newtheorem{lemma}[theorem]{Lemma}
\newtheorem{proposition}[theorem]{Proposition}
\newtheorem{definition}[theorem]{Definition}
\newenvironment{proof}[1][Proof:]{\begin{trivlist}
\item[\hskip \labelsep {\bfseries #1}]}{\end{trivlist}}
\newenvironment{proofT}[1][Proof of Theorem 2.2:]{\begin{trivlist}
\item[\hskip \labelsep {\bfseries #1}]}{\end{trivlist}}
\newenvironment{example}[1][Example:]{\begin{trivlist}
\item[\hskip \labelsep {\bfseries #1}]}{\end{trivlist}}
 \newenvironment{remark}[1][Remark:]{\begin{trivlist}
 \item[\hskip \labelsep {\bfseries #1}]}{\end{trivlist}}
\newcommand{\qed}{\hfill\ensuremath{\square}}
\newcolumntype{C}[1]{>{\centering\let\newline\\\arraybackslash\hspace{0pt}}m{#1}}
\begin{document}
\title{Consistency of multi-time Dirac equations \\ 
    with general interaction potentials}

\author{
Dirk-Andr\'e Deckert\thanks{deckert@math.lmu.de} \ and  Lukas Nickel\thanks{nickel@math.lmu.de} \\[0.2cm]
	Mathematisches Institut, Ludwig-Maximilians-Universit\"at\\
	Theresienstr.\ 39, 80333 M\"unchen, Germany 
}

\date{\today}

\maketitle

\begin{abstract}
    In 1932, Dirac proposed a formulation in terms of multi-time wave functions
    as candidate for relativistic many-particle quantum mechanics. A 
    well-known consistency condition that is necessary for existence of solutions 		    strongly restricts the possible interaction types between the particles. It was
    conjectured by Petrat and Tumulka that interactions described by
    multiplication operators are generally excluded by this condition, and they
    gave a proof of this claim for potentials without spin-coupling.  Under
    smoothness assumptions of possible solutions we show that there are
    potentials which are admissible, give an explicit example, however, show
    that none of them fulfills the physically desirable Poincaré invariance.
    We conclude that in this sense Dirac's multi-time formalism does not allow
    to model interaction by multiplication operators, and briefly point out several 
    promising approaches to interacting models one can instead pursue.
    \noindent \\ \textbf{Keywords:} multi-time wave functions, relativistic quantum mechanics, Dirac equation, consistency condition, interaction potentials, spin-coupling, solution theory of multi-time systems
\end{abstract}

\section{Introduction}

The absence of absolute simultaneity in the theory of relativity has
consequences for the formulation of relativistic quantum mechanics. Very
elementarily, this can already be observed when considering the Lorentz
transformation of a simultaneous configuration of $N$ particles, $(t,
\mathbf{x}_1),...,(t, \mathbf{x}_N)$, which yields a configuration $(t_1',
\mathbf{x}_1'),...,(t_N', \mathbf{x}_N')$ with $N$ different times. This fact
immediately poses the question of how a wave function or quantum state $\psi (t,
\mathbf{x}_1, ..., \mathbf{x}_N)$, which is usually described as dependent on
one time $t$ and Euclidean positions $\vv x_1,\dots,\vv x_N$, behaves under such
a transformation. Dirac addressed this issue already in 1932 and suggested to
generalize the concept of the familiar wave function $\psi (t, \mathbf{x}_1,
..., \mathbf{x}_N)$ to a \emph{multi-time} wave function $\psi (x_1, ..., x_N)$, where now $x_j= (t_j, \mathbf{x}_j)$ denote $N$ space-time points in Minkowski space; see \cite{DiracRQM}. This idea led to the fundamental works
\cite{DiracFockPodolsky,tomonaga_relativistically_1946} which provided the basis
for the relativistic formulation of quantum field theory.  In his approach,
Dirac defined the evolution of the multi-time state $\psi$ by requiring it to
fulfill $N$ Dirac equations, one for each each time variable $t_j$. Although this
concept seems natural, it is very restrictive in admission of solutions because it is a necessary condition for the existence of solutions, already discussed in
\cite{blochmultitime} and henceforth called \emph{consistency condition}, that the $N$ single-time evolutions commute. This condition becomes
subtle when the $N$ particles are allowed to interact. In this respect,
Dirac's approach calls for a mathematical study of the corresponding
solution theory, which was initiated recently in a series of works by Petrat and
Tumulka \cite{multitimeNoPotentials,multi-timeQFT,multitimePaircreation} and by
Lienert \cite{MatthiasPhD,1d_model,matthias2bd} and Lienert and Nickel
\cite{NTeilchenPaper}. As shown
in \cite{multitimeNoPotentials}, the consistency condition basically rules out
any interaction mediated by potentials without spin-coupling. In the following we extend the results of \cite{multitimeNoPotentials} and prove that the consistency condition is also violated for Poincar{\'e} invariant interaction potentials including
spin-coupling. 
\\This raises the question of how to introduce a sensible interaction in the multi-time formalism, which led Dirac et al.\ \cite{DiracFockPodolsky} to consider 
second-quantized fields that mediate the interaction; see also the
recently studied multi-time models of quantum field theory in
\cite{multi-timeQFT,multitimePaircreation}. In one dimension, another way of introducing a consistent interaction between the
$N$ particles was presented in \cite{1d_model,NTeilchenPaper}. There, rigorous models of interaction by boundary conditions have been constructed. There is some connection to the new concept of interior-boundary conditions by Teufel and Tumulka, which has so far been used to formulate certain non-relativistic QFT models without divergences \cite{IBC1,IBC2,IBC1d}. It is an open but very interesting question if the method of interior-boundary conditions can help to formulate mathematically well-defined models of particle creation and annihilation in the multi-time formalism.
\\ A further strategy that has been pursued is to generalize the concept of a potential to terms of the form $V(x_1,\dots,x_N,p_1,\dots,p_n)$ that are no multiplication operators, but also depend on the momenta, i.e.\ derivatives  \cite{twobodydirac1, twobodydirac2}. Lastly, we consider the idea of multi-time integral equations to be very promising. Instead of a system of differential equations such as \eqref{eq:intromultsys}, one can impose a single integral equation for $\psi(x_1,...x_N)$. This avoids the problem of the consistency condition and makes a more general class of models possible.
A prominent example known from QED is the Bethe-Salpeter equation \cite{bethesalpeter,bethesalpeterReview}, whose mathematical features are not well-understood and would deserve further study (see also \cite{master_nickel}).

\paragraph{Definition of the model.} The model for our investigation is given
by the system of evolution equations
\begin{equation} 
    \label{eq:intromultsys} i \frac{\partial}{\partial t_k} \psi (x_1, ..., x_N)
    = H_k(x_1,...,x_N)  \psi (x_1, ..., x_N), \quad k=1,...,N,
\end{equation}
where the \textit{partial Hamiltonians} $H_k$ are given by
\begin{equation} \label{eq:Hamiltonianform}
H_k = H^0_k + V_k,
\end{equation}
with $H^0_k$ being the free Dirac Hamiltonian of the $k$-th particle (see
\eqref{eq:freedirac}) below). The interaction shall be described by the operator 
$V_k$ which is given in terms of 
a (self-adjoint) spin-matrix valued multiplication
operator $V_k(x_1,...x_N)$ that depends on the space-time coordinates $x_1,
\dots, x_N$.  For this model, as was first recognized by Bloch
\cite{blochmultitime} and further investigated by Petrat and Tumulka
\cite{multitimeNoPotentials}, a necessary condition for existence of solutions to 
\eqref{eq:intromultsys} is the aforementioned \emph{consistency
condition}
\begin{equation} \label{eq:ccversion2}
    \left(\left[ H_j,  H_k \right]  - i \frac{\partial V_k}{\partial t_j} +  i
    \frac{\partial V_j}{\partial t_k}\right)\psi= 0, \qquad \forall k \neq j. 
\end{equation}
In \cite{multitimeNoPotentials}, Petrat und Tumulka conjectured that
interacting systems of the form \eqref{eq:intromultsys} with general
non-vanishing potentials that lead to interaction between the particles are
excluded as they would violate the consistency condition \eqref{eq:ccversion2}. They
gave a proof of this claim under
the assumption that the potentials $V_k$ depend on the spin-index of the $k$-th
particle only.  This rules out a number of
conceivable potentials, but not all of them: Potentials such as the one of the
Breit equation \cite{Breit1929, Breit1932}, which can be derived as an
approximation to the Bethe-Salpeter equation of QED (see \cite{greinerQED}),
contain a more complicated spin-coupling, which poses the question whether more
general potentials may indeed comply with condition \eqref{eq:ccversion2} and thereby to
well-posedness of \eqref{eq:intromultsys} in terms of an initial value problem.
\\

As main results of this paper, we present a concrete example of a spin-coupling
interaction potential which satisfies the consistency condition. However, we
will also show that the class of potentials admitted by the consistency
condition is rather small.  In particular, under certain smoothness conditions on
possible solutions $\psi$, we identify this class completely and show that it does not
contain Poincar{\'e} invariant potentials. Therefore, combining the mathematical consistency condition
with the physical requirement of Poincar{\'e} invariance, our results show that any
type of potential acting as a multiplication operator must be excluded
as possible candidates for modeling the interaction between the $N$ particles.
\\

After the following paragraph about the employed notation and conventions, we
present our results in Section~\ref{sec:results} and the proofs and more
detailed derivations in Sections~\ref{sec:cc} and \ref{sec:spinpotentials}. 

\paragraph{Notations and conventions.}
We consider $4$-dimensional Minkowski space-time with metric $g = \diag (1, -1,
-1, -1)$, with the usual notation that Greek indices run from $0$ to $3$ and
Latin indices $a,b,...$ only over the spatial components $1,2,3$. The Einstein
summation convention is employed for Greek indices only. Particle labels are
denoted also by Latin indices, $j,k,...$ and run from $1$ to the total particle
number $N$. Space-time points are denoted by $x = (t, \mathbf{x})$. Throughout,
the abbreviation $\partial_{k,\mu} := \frac{\partial}{\partial x_k^\mu}$ will be
used.

The gamma matrices are arbitrary $4 \times 4$-matrices that form a
representation of the Clifford algebra, i.e.\ fulfill the anti-commutation
relation 
\begin{equation}
\left\{ \gamma^\mu, \gamma^\nu \right\} = 2 g^{\mu \nu} \id, \quad \mu, \nu = 0,1,2,3.
\end{equation} 
Moreover, the matrix $\gamma^0$ is hermitian, $\gamma^k$ anti-hermitian, and a fifth gamma matrix is defined as
\begin{equation}
\gamma^5 := i \gamma^0 \gamma^1 \gamma^2 \gamma^3.
\end{equation}
The free Dirac Hamiltonian for the $k$-th particle is given by
 \begin{equation} \label{eq:freedirac}
H^0_k = -i  \sum_{a=1}^3 \gamma_k^0 ~ \gamma^a_k \partial_{k,a} + \gamma_k^0 m_k,
\end{equation}
where $m_k$ is the mass of the $k$-th particle and we use the following convention for the matrices: Since we are always working in the $N$-fold tensor product of $\C^4$, we write for some $4 \times 4$-matrix $M$:
\begin{equation}
M_k := \id \otimes \dots \otimes \id \underbrace{\otimes \ M \ \otimes}_{\text{k-th place}} \id \otimes \dots \otimes \id .
\end{equation}
It is well-known that the Dirac operator \eqref{eq:freedirac} is self-adjoint on
$\dom(H^0_k) = H^1(\R^3,\C^4)$; see \cite{thaller}.
Furthermore, it will be convenient to use the notation $\alpha_k^\mu :=
\gamma^0_k \gamma^\mu_k$ so that we may write the multi-time system 
\eqref{eq:intromultsys} as
\begin{equation} \label{eq:multitimediracwithalpha}
(i \alpha^\mu_k \partial_{k, \mu} - \gamma^0_k m_k) \psi (x_1, ..., x_N) =  V_k (x_1, ...x_N) \psi (x_1, ..., x_N), \quad k=1,...,N.
\end{equation}
Hence, the wave function $\psi (x_1, ..., x_N)$ takes values in $(\C^4)^{\otimes
N} \cong \C^K$, $K := 4^N$. 

\section{Results} \label{sec:results}

In order to present the results two remarks are in order.  First, we need to make
precise what is meant by
the notion \textit{interaction potential}. External potentials of the form $V_k(x_k)$
that do not generate entanglement must be excluded, and also potentials that
seemingly depend on different coordinates, but that actually only arise from
external potentials by a change of coordinates in the spinor space $\C^K$.
Therefore we define:
\begin{definition}
A collection of potentials $V_k$, $k=1,...,N,$ given as spin-matrix valued
multiplication operators $V_k(x_1,...,x_N)$ is called non-interacting iff there is
a unitary map $U(x_1,...,x_N): \C^K \to \C^K$ such that for all $k=1,\dots,N$,
$\tilde{\psi}:=U(x_1,...x_N)(\psi(x_1,...x_N))$ satisfies a system of the form \eqref{eq:intromultsys} where for each $k$, the potential $V_k(x_k)$ is independent of all other coordinates $x_1,...x_{k-1},x_{k+1},...x_N$. In the other case, we call the collection of potentials interacting.
\end{definition}
Petrat and Tumulka called potentials that are connected via a unitary map $U$
\textit{gauge-equivalent} \cite{multitimeNoPotentials}, which means that
interacting potentials in the sense of our definition are exactly those that are
not gauge-equivalent to external potentials.

Second, it has to be emphasized that 
the natural domain of a multi-time wave function is not the whole configuration
space-time $\R^{4N}$, but the subset 
\begin{equation}
    \label{eq:configurationspace}
\mathscr{S}^{(N)} := \left\lbrace \left. (t_1, \mathbf{x}_1,...,t_N,\mathbf{x}_N) \in \R^{4N} \right| \forall k \neq j: (t_j - t_k)^2 < |\mathbf{x}_j - \mathbf{x}_k|^2  \right\rbrace,
\end{equation} 
which contains the configurations where the $N$ particles are space-like
separated. A detailed explanation of this fact is found in \cite{1d_model}.
Here, we only state that there are at least two reasons to consider a multi-time
wave function only on $\mathscr{S}^{(N)}$:
\begin{itemize}
    \item \textit{Sufficiency}: In order to interpret Born's rule on any space-like hypersurface,
        it is sufficient for $\psi$ to have domain $\mathscr{S}^{(N)}$.
        A Lorentz transformation of a simultaneous configuration as presented
        above always yields a space-like configuration. Indeed, the mere concept
        of ``$N$-particle configuration'' implies the use of $\mathscr{S}^{(N)}$
        because the presence of $N$ particles is always understood with respect
        to a frame, e.g.\ a laboratory frame, which is represented by a
        space-like hypersurface.
    \item \textit{Necessity}: In quantum field theory the left-hand side of the
        consistency condition \eqref{eq:ccversion2} generically contains
        commutators of field operators, such as $[\phi(x_j),\phi(x_k)]$, which are
        given in terms of the Pauli-Jordan distribution
        \cite{tomonaga_relativistically_1946}. However, the latter has only
        support for $(x_j-x_k)^2\geq 0$, and hence, outside of $\mathscr{S}^{(N)}$.
        This is the reason why multi-time formulations of quantum field theory
        such as \cite{DiracFockPodolsky} as well as 
        \cite{multi-timeQFT, multitimePaircreation} are consistent on
        $\mathscr{S}^{(N)}$, but not on $\R^{4N}$. 
\end{itemize}
Therefore, all results will be proven mainly on 
$\mathscr{S}^{(N)}$ and only besides on $\R^{4N}$. Lastly, we have to make precise what is meant by Poincar{\'e} invariance of potentials. For $\Lambda$ in the proper Lorentz group and $a \in \R^4$, the Poincar{\'e} transformation maps $x \mapsto x' =  \Lambda x + a$ and the multi-time wave function transforms as
\begin{equation}
\psi'\left(x_1,...x_N \right) = S(\Lambda)^{\otimes N} \psi \left(\Lambda^{-1} (x_1-a),...,\Lambda^{-1} (x_N-a) \right),
\end{equation}
with the spin transformation matrix $S(\Lambda)$ that fulfills $S(\Lambda) \gamma S^{-1}(\Lambda) = \Lambda \gamma$ . We call a potential $V_k$ Poincar\'e invariant if it satisfies
\begin{equation}
V_k \left(x_1,...x_N \right) = S(\Lambda)^{\otimes N} V_k \left(\Lambda^{-1} (x_1-a),...\Lambda^{-1} (x_N-a) \right)  S^{-1} (\Lambda)^{\otimes N},
\end{equation}
which is the condition for \eqref{eq:intromultsys} to be Poincar{\'e} invariant.
Our main result can then be stated as follows:
\begin{theorem} \label{thm:main} \noindent
    Let $N=2$, $\Omega=\mathbb R^{4N}$ or $\Omega=\mathscr{S}^{(N)}$.
    If $V_k(x_1,\cdots,x_N)$ are interacting potentials in $C^1(\Omega, \C^{K \times K})$ and for all initial values $\varphi \in C^\infty_c (\mathbb R^{3N}\cap \Omega, \C^K)$, there is a solution $\psi \in C^2(\Omega,\C^{K})$ to the multi-time system of Dirac equations \eqref{eq:intromultsys}, then the potentials $V_k$ are not Poincar{\'e} invariant.
\end{theorem}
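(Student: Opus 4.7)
\emph{Plan of proof.} The strategy is to promote the hypothesis on existence of solutions to a pointwise matrix identity, extract from it the separate constraints on the first-derivative and zeroth-order parts, classify the admissible potentials algebraically, and then confront that classification with Poincar\'e covariance.

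First, I would invoke the consistency condition \eqref{eq:ccversion2}, which is necessary on every solution. For each $\varphi\in C^\infty_c(\mathbb{R}^{3N}\cap\Omega,\mathbb{C}^K)$ the hypothesis yields a $C^2$-solution $\psi$ with $\psi|_{t_1=t_2=0}=\varphi$, and expanding $[H_j,H_k]$ via $H_k=H_k^0+V_k$ together with $[H_j^0,H_k^0]=0$ (distinct tensor factors) writes the left-hand side of \eqref{eq:ccversion2} as a linear combination of $\psi$ and the spatial derivatives $\partial_{j,a}\psi,\partial_{k,a}\psi$, with coefficients determined only by $V_j,V_k$ and their derivatives. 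Since the value and first spatial derivatives of $\varphi$ at any equal-time point are independently prescribable, and propagation in each time variable separately reaches all of $\Omega$, each coefficient must vanish pointwise on $\Omega$.

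Second, the vanishing of the $\partial_{j,a}\psi$-coefficient, which originates solely from $[H_j^0,V_k]$, is equivalent to $[\alpha_j^a,V_k]=0$ for all $a=1,2,3$ and $j\neq k$. The relations $\{\alpha^i,\alpha^j\}=2\delta^{ij}\id$ realise $\alpha^1,\alpha^2,\alpha^3$ as generators of the complex Clifford algebra $\mathrm{Cl}(3,\mathbb{C})\cong M_2(\mathbb{C})\oplus M_2(\mathbb{C})$ acting on $\mathbb{C}^4$ as the sum of its two inequivalent two-dimensional irreducible representations (the $\gamma^5$-eigenspaces); hence its commutant in $M_4(\mathbb{C})$ is the two-dimensional span of $\id$ and $\gamma^5$. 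For $N=2$ this forces the tensor parametrization
\begin{equation*}
V_1 = A_1\otimes\id + A_2\otimes\gamma^5, \qquad V_2 = \id\otimes B_1 + \gamma^5\otimes B_2,
\end{equation*}
with $C^1$ matrix-valued functions $A_i,B_i:\Omega\to M_4(\mathbb{C})$.

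Third, substituting these decompositions into the remaining zeroth-order part of the consistency identity and collecting with respect to the basis $\{\id,\gamma^5\}$ on each tensor factor produces a closed system of matrix equations coupling $\partial_{2,\mu}A_i$ and $\partial_{1,\mu}B_i$ through the $\gamma^0$-mass terms and the commutator $[V_1,V_2]$. Splitting each $A_i,B_i$ into its $\gamma^5$-even and $\gamma^5$-odd components decouples the system into sectors that can be classified explicitly, yielding the full family of admissible interacting potentials. To finally rule out Poincar\'e invariance, I would use that $S(\Lambda)$ commutes with $\gamma^5$ for $\Lambda$ in the proper Lorentz group, so that the decomposition above is stable under conjugation by $S(\Lambda)^{\otimes 2}$, and therefore each $A_i,B_i$ must transform covariantly as a single-particle Dirac tensor while its dependence on the pair $(x_1,x_2)$ is a Poincar\'e-invariant function. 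Combining this covariance with the algebraic classification should force every component depending non-trivially on the other particle's coordinate to vanish, leaving $V_k$ unitarily gauge-equivalent to an external potential $V_k(x_k)$ and contradicting the interacting hypothesis.

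The main obstacle lies in the third step: the reduced zeroth-order system couples four matrix-valued unknowns through both differential and commutator relations, and one must exploit the $\gamma^5$-grading and the mass terms (which mix sectors of different $\gamma^5$-parity) carefully in order to integrate the system and arrive at an explicit list of admissible interacting pairs $(V_1,V_2)$. Once this list is in hand, the Poincar\'e step reduces to a short covariance computation; without it, covariance alone still admits far too many candidate potentials to conclude.
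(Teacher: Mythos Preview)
Your first two steps track the paper closely. The paper's Proposition~2.2 is precisely the passage from existence of $C^2$ solutions to the pointwise consistency identity, and your extraction of $[\alpha_j^a,V_k]=0$ from the first-derivative coefficients is exactly Theorem~2.3. Your Clifford-algebra argument for the commutant (identifying $\mathrm{span}\{\id,\gamma^5\}$ as the centraliser of $\alpha^1,\alpha^2,\alpha^3$) is cleaner than the paper's brute-force commutator table, but lands at the same tensor parametrisation $V_1=A_1\otimes\id+A_2\otimes\gamma^5$, $V_2=\id\otimes B_1+\gamma^5\otimes B_2$.

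The divergence is in step three, and here your plan is both vaguer and harder than what the paper actually does. You propose to classify \emph{all} consistent interacting pairs $(V_1,V_2)$ and only afterwards impose Poincar\'e covariance. The paper instead expands each $A_i,B_i$ further in the Dirac basis $\{\alpha^\mu,\gamma^5\alpha^\mu,\gamma^\mu,\gamma^5\gamma^\mu\}$ and shows (Lemma~2.5) that the $\alpha^\mu$- and $\gamma^5\alpha^\mu$-components satisfy integrability conditions making them \emph{pure gauge}: a unitary $e^{iM(x_1,x_2)}$ removes their cross-dependence. Once this gauge is fixed, \emph{translation invariance} alone forces those components to be constants. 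This is the pivotal simplification you are missing: with the $Y_{2,\nu},Z_{2,\nu}$ constant, the remaining first-order consistency equations (the paper's \eqref{eq:cc5}--\eqref{eq:cc12}) can be differentiated once more to yield \emph{second-order linear ODEs with constant coefficients} in the relative coordinate $x=x_1-x_2$, whose general solutions are exponentials $C_1e^{c_\nu x^\nu}+C_2e^{-c_\nu x^\nu}$ (Lemma~2.6). Such a form is manifestly not Lorentz invariant, and the theorem follows.

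Without inserting translation invariance early, the unconstrained consistency system genuinely admits a large family of interacting solutions (the paper exhibits one in Lemma~2.4), so your hoped-for ``explicit list'' would be infinite-dimensional and the final covariance step would not reduce to a short computation. The moral: split Poincar\'e invariance into translation invariance (used first, to freeze coefficients and linearise the problem) and Lorentz invariance (used last, to kill the resulting exponentials). Your outline conflates these and therefore leaves the crux unresolved.
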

We only formulate the theorem for the case $N=2$, although we expect it to hold for general $N$ and we prove several intermediate results for any $N$.  For larger numbers of particles, however, some parts in the proofs which are based on a direct computation in terms of gamma matrices quickly become very complex and hardly traceable. In several partial results, we will also not restrict to $\Omega=\mathbb R^{4N}$ or $\Omega=\mathscr{S}^{(N)}$, but consider any open set $\Omega \subset \R^{4N}$. 
The strategy of proof is illustrated as follows:
\begin{enumerate}[label=(\textbf{\alph*})]
    \item \textbf{Existence $\Longrightarrow$ Consistency:} If a solution to
         \eqref{eq:intromultsys} exists, then the consistency
        condition \eqref{eq:ccversion2} has to hold.
    \item \textbf{Consistency $\Longrightarrow$ Restrictions on potentials:} If
        the consistency condition \eqref{eq:ccversion2} holds, then the admissible
        potentials are restricted and no Poincar{\'e} invariant ones are
        possible.
\end{enumerate}

\subsection*{Step (a): The consistency condition.} Let us first discuss why one
expects the consistency condition \eqref{eq:ccversion2} to be necessary for existence of solutions. The condition can heuristically be understood as path independence of the integration of the system of evolution equations  \eqref{eq:intromultsys}: E.g. prescribing initial values $\psi (0, \mathbf{x}_1,...0, \mathbf{x}_N) $ at $t_1 = ... = t_N = 0$, it makes no difference if one decides to evolve first in $t_j$-direction and then in $t_k$-direction or the other way around, one always has to arrive at the same well-defined $\psi (t_1, \mathbf{x}_1, ..., t_N, \mathbf{x}_N)$. Therefore, the actions of the respective equations in our system \eqref{eq:intromultsys} on the possible initial values have to commute. 

Petrat and Tumulka have proven  that the existence of a solution for every initial datum in the Hilbert space necessitates the consistency condition \eqref{eq:ccversion2} in two different cases \cite[Theorems 1 and 2]{multitimeNoPotentials}: 
\begin{itemize}
\item for time-independent, possibly unbounded partial Hamiltonians $H_k$,
\item for time-dependent, but smooth and bounded partial Hamiltonians $H_k$.
\end{itemize} 
Here, we generalize the results of Petrat and Tumulka to the relevant case of unbounded Hamiltonians that may include a time-dependence in the potentials. Our proposition is a rather direct consequence of the differentiability of solutions and makes the idea of Bloch \cite[p.~304]{blochmultitime} mathematically precise. 

\begin{proposition} \label{thm:ccdifferentiable}
Let $\Omega \subset \R^{4N}$ be open. Suppose the multi-time system \eqref{eq:intromultsys}, with $V_k$ being a function in $C^{1} (\Omega,\C^{K \times K})$, possesses a solution $\psi \in C^{2} (\Omega,\C^K)$. Then the consistency condition \eqref{eq:ccversion2} holds for all $(x_1,...,x_N) = X \in \Omega$. 
\end{proposition}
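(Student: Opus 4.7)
The plan is to exploit the equality of mixed partial derivatives, which is available because $\psi \in C^{2}(\Omega,\C^K)$. Since the multi-time system~\eqref{eq:intromultsys} gives us $\partial_{t_k}\psi = -iH_k\psi$ and $\partial_{t_j}\psi = -iH_j\psi$ directly, the identity $\partial_{t_j}\partial_{t_k}\psi = \partial_{t_k}\partial_{t_j}\psi$ should, after a product-rule expansion, reproduce precisely the consistency condition~\eqref{eq:ccversion2}. This is the rigorous version of Bloch's heuristic path-independence argument.

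Concretely, I would fix $X \in \Omega$ and $k \neq j$, and differentiate both sides of $\partial_{t_k}\psi = -iH_k\psi$ in the $t_j$-direction. On the left we get $\partial_{t_j}\partial_{t_k}\psi$. On the right, since $H_k = H_k^0 + V_k(x_1,\dots,x_N)$ and only $V_k$ depends on $t_j$, the product rule gives $-i(\partial_{t_j}V_k)\psi - iH_k(\partial_{t_j}\psi)$; substituting $\partial_{t_j}\psi = -iH_j\psi$ this becomes $-i(\partial_{t_j}V_k)\psi - H_kH_j\psi$. Performing the symmetric computation with $j$ and $k$ swapped and subtracting, the mixed-derivative equality yields
\begin{equation*}
\bigl([H_j,H_k] - i\,\partial_{t_j}V_k + i\,\partial_{t_k}V_j\bigr)\psi = 0,
\end{equation*}
which is exactly~\eqref{eq:ccversion2} at the chosen $X$.

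The one place where some care is needed, and which I view as the main technical obstacle, is the step $\partial_{t_j}(H_k\psi) = (\partial_{t_j}H_k)\psi + H_k(\partial_{t_j}\psi)$. Because $H_k^0$ contains spatial derivatives $\partial_{k,a}$, this amounts to commuting $\partial_{t_j}$ with $\partial_{k,a}$ inside $\psi$. This is legitimate precisely under our regularity hypothesis $\psi \in C^2(\Omega,\C^K)$, by Schwarz's theorem applied to each component of $\psi$. The remaining contribution from the potential, $\partial_{t_j}(V_k\psi) = (\partial_{t_j}V_k)\psi + V_k(\partial_{t_j}\psi)$, uses only $V_k \in C^1(\Omega,\C^{K\times K})$ and the differentiability of $\psi$. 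Note that the mass term $\gamma_k^0 m_k$ is constant and poses no issue. Thus under the stated regularity assumptions all manipulations are justified and the identity holds pointwise on $\Omega$, as claimed.
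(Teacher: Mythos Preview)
Your proof is correct and follows essentially the same route as the paper: use Schwarz's theorem on the mixed time derivatives of $\psi$, substitute the multi-time equations, and expand via the product rule to obtain \eqref{eq:ccversion2}. Your explicit remark that commuting $\partial_{t_j}$ with the spatial derivatives $\partial_{k,a}$ inside $H_k^0$ also relies on the $C^2$ hypothesis is a nice clarification that the paper leaves implicit.
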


The proof is given in Section \ref{sec:proofccdifferentiable}, followed by some remarks about a more geometric way of understanding the consistency condition in Section \ref{sec:geometric}.




\subsection*{Step (b): Consistent potentials.} 
The consistency condition puts strong restrictions on the spin-coupling induced
by the potentials. The following example shows the inconsistency for one natural looking choice.
\begin{example}We consider a two-particle system  \eqref{eq:intromultsys} with $V_1 = \alpha_2^\mu A_\mu(x_1, x_2) $ and $V_2 = \alpha_1^\mu B_\mu(x_1, x_2)$ for some smooth, compactly supported functions $A_\mu, B_\mu$. This is suggested by the usual way of adding a 4-vector potential to the single-time Dirac equation, which is by adding $\alpha^\mu A_\mu$ to the Hamiltonian. One could think that interaction is achieved by choosing the gamma matrices of the other particle, as done here. But then the consistency condition is
\begin{align}
\left[ \alpha^\mu_2 A_\mu , -i \alpha^\nu_2 \partial_{2, \nu} + \gamma^0_2 m_2 \right] &= 0 \nonumber
\\ \Longleftrightarrow -2m_2 \gamma_2^\mu A_\mu + i \alpha_2^\nu \alpha_2^\mu \left( \partial_{2, \nu} A_\mu \right) + i A_\mu [ \alpha^\nu_2, \alpha^\mu_2] \partial_{2, \nu} &= 0.
\end{align}
There is no possibility that the respective terms will cancel each other, so any $A_\mu$ different from zero will make the equations inconsistent. In particular, the derivative term with $ \partial_{2, \nu}$ has to vanish separately, which will be a crucial ingredient in the proof of theorem \ref{thm:ccimplies}. A similar calculation  excludes potentials of the form $V_k \sim F_{\mu \nu}(x_1,x_2) \gamma_1^\mu \gamma_2^\nu$, too.
\end{example}

To have a chance of being consistent, the potentials may only depend on few matrices, which are the identity matrix and $\gamma^5$. To see this, we need to reformulate the consistency condition to a more useful version. That the bracket in \eqref{eq:ccversion2} applied to any solution $\psi$ ought to be zero implies that it must also be zero on every initial value $\varphi = \left.\psi \right|_{t_1=...=t_N=0}$. The initial values will be defined on a $3N$-dimensional set $U$, an intersection of $\Omega$ with the time-zero hypersurface. The assumption that there are solutions for all initial values in a certain class, e.g.\ the smooth compactly supported functions, allows us to draw general conclusions.

\begin{theorem} \label{thm:ccimplies} We assume:
\begin{enumerate}[label=(\textbf{\Alph*})]
\item  $U \subseteq \R^{3N}$ is open and simply connected. For a multi-time Dirac system \eqref{eq:intromultsys} with continuously differentiable $V_k$, we have for each $\varphi \in
C^\infty_c (U, \C^K)$,
\begin{equation} \label{eq:ccinitial}
\left( \left[ H_j,  H_k \right] - i  \frac{\partial V_k}{\partial t_j} +  i \frac{\partial V_j}{\partial t_k} \right) \varphi = 0, \qquad \forall k \neq j.
\end{equation}
\end{enumerate}
Then, for each $k \neq j$, the $k$-th spin component of the potential $V_j$ is spanned by $\id_k$ and $\gamma^5_k$.
\end{theorem}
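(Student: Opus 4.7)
The plan is to observe that the consistency bracket, when applied to $\varphi$, produces not only multiplication terms but also genuine first-order derivative terms on $\varphi$, and that the vanishing of these derivative coefficients immediately pins down the $k$-th spin component of $V_j$ via a short centralizer computation.

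\textbf{Expansion and extraction of derivative conditions.} First I would compute $[H^0_j, V_k]\varphi$ directly. Because $V_k$ need not commute with the spinor matrix $\alpha^a_j$ when $V_k$ has nontrivial structure in the $j$-th slot, the standard cancellation of derivatives-on-$\varphi$ inside a commutator of a differential and a multiplication operator \emph{fails}; explicitly,
\begin{equation*}
[H^0_j, V_k]\varphi = -i\alpha^a_j(\partial_{j,a}V_k)\varphi - i[\alpha^a_j, V_k]\,\partial_{j,a}\varphi + m_j[\gamma^0_j, V_k]\varphi,
\end{equation*}
with an analogous identity for $[V_j, H^0_k]\varphi$ yielding a residual $+i[\alpha^a_k, V_j]\,\partial_{k,a}\varphi$. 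Assembling everything and absorbing the separately supplied time derivatives via $\alpha^0 = \id$, the consistency condition reads
\begin{equation*}
\mathcal M(X)\,\varphi \;+\; \sum_{a=1}^{3}\Bigl(-i[\alpha^a_j, V_k]\,\partial_{j,a}\varphi + i[\alpha^a_k, V_j]\,\partial_{k,a}\varphi\Bigr) \;=\; 0,
\end{equation*}
where $\mathcal M$ collects the remaining multiplication pieces. Since $\varphi \in C_c^\infty(U,\C^K)$ can be chosen so that $\varphi(X_0)$ and each first-order derivative $\partial_{l,a}\varphi(X_0)$ take arbitrary, independent values at any $X_0 \in U$, each coefficient block must vanish on $U$ separately. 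In particular,
\begin{equation*}
[\alpha^a_k, V_j] = 0 \qquad\text{for } a = 1,2,3,\ j\neq k.
\end{equation*}

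\textbf{Centralizer computation.} The conclusion then reduces to a purely algebraic fact: the centralizer of $\{\alpha^1, \alpha^2, \alpha^3\}$ in $\C^{4\times 4}$ equals $\mathrm{span}\{\id, \gamma^5\}$. In the Dirac representation $\alpha^a = \bigl(\begin{smallmatrix}0 & \sigma^a\\ \sigma^a & 0\end{smallmatrix}\bigr)$ with Pauli blocks, a short $2 \times 2$-block calculation shows that any $M \in \C^{4\times 4}$ commuting with all three $\alpha^a$ has the form $M = \bigl(\begin{smallmatrix}b\,\id & c\,\id \\ c\,\id & b\,\id\end{smallmatrix}\bigr) = b\,\id + c\,\gamma^5$ for some $b, c \in \C$. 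Applying this in the $k$-th tensor slot to the condition $[\alpha^a_k, V_j] = 0$ gives exactly that the $k$-th spin component of $V_j$ lies in $\mathrm{span}\{\id_k, \gamma^5_k\}$.

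\textbf{Main obstacle.} The substantive step is the first one: recognizing that $[H^0_j, V_k]\varphi$ retains the piece $-i[\alpha^a_j, V_k]\,\partial_{j,a}\varphi$, which must vanish in its own right. The example preceding the theorem already hints at this phenomenon (``the derivative term with $\partial_{2,\nu}$ has to vanish separately''). Once this observation is in hand, the remaining work is a routine linear-algebra computation, and the simply-connectedness hypothesis on $U$ does not play a direct role along this route.
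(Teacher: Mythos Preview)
Your proposal is correct and follows essentially the same route as the paper: expand the consistency bracket, isolate the residual first-order derivative terms $[\alpha^a_k,V_j]\,\partial_{k,a}\varphi$, argue they must vanish separately by freely prescribing $\varphi$ and its first derivatives at a point (the paper states this as an auxiliary lemma), and then identify the centralizer of $\{\alpha^1,\alpha^2,\alpha^3\}$ as $\mathrm{span}\{\id,\gamma^5\}$. The only cosmetic difference is in the last step: the paper expands in the Clifford basis $\{\alpha^\mu,\gamma^5\alpha^\mu,\gamma^\mu,\gamma^5\gamma^\mu\}$ and tabulates the commutators, whereas you work directly with $2\times 2$ Pauli blocks in the Dirac representation; both reach the same conclusion, and your observation that simple-connectedness of $U$ is not used here (it enters only later, in the gauge argument of Lemma~\ref{thm:lemmacc}) is accurate.
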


The proof is given in Section \ref{sec:possiblespins}. One can directly see that the above example is not in the class of admissible potentials. 

Theorem \ref{thm:ccimplies} allows us to proceed by a basis decomposition. All possible matrix structures that might appear in $V_1$ and $V_2$ can be listed and the consistency condition can be explicitly evaluated, as will be done in Section \ref{sec:basisdecomp}. In Lemma \ref{thm:lemmacc}, we show that the consistency condition is equivalent to the system of equations \eqref{eq:cc1} to \eqref{eq:cc16}, and that only eight possibly interacting terms remain.
\\ It turns out that these possibilities for interacting terms in the potentials can not be excluded by general arguments. In fact, interacting potentials that fulfill the consistency condition exist, for example the ones in the following lemma.
\begin{lemma} \label{thm:consistentexample}
Let $C_\nu$ and $c_\nu$ be constants for $\nu = 0,1,2,3$ with at least one $C_\nu$ and $c_\nu$ different from zero, and define $x := x_2 - x_1$. Consider the multi-time Dirac system  \eqref{eq:intromultsys} for two particles with potentials 
\begin{align}
V_1 & = \gamma_1^\mu C_\mu \exp \left( 2i \gamma^5_1 c_\lambda x^\lambda \right) - m_1 \gamma^0_1 \nonumber
\\ V_2 & = \gamma_1^5 \alpha_2^\nu c_\nu . \label{eq:hoho}
\end{align}
\begin{enumerate}
\item This system is consistent, i.e.\ \eqref{eq:ccversion2} holds.
\item This system is interacting.
\end{enumerate}
\end{lemma}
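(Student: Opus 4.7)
The statement has two claims, consistency and genuine interaction, which I treat separately. For consistency, I verify \eqref{eq:ccversion2} by direct computation. Setting $\phi := 2c_\lambda(x_2-x_1)^\lambda$, the $-m_1\gamma_1^0$ piece of $V_1$ exactly cancels the mass part of $H_1^0$, leaving the effective $H_1 = -i\alpha_1^a\partial_{1,a} + \gamma_1^\mu C_\mu e^{i\gamma_1^5\phi}$. In $[H_1, H_2]$ most contributions drop out: the free Hamiltonians act on disjoint particle factors and commute; the piece $[-i\alpha_1^a\partial_{1,a}, \gamma_1^5\alpha_2^\nu c_\nu]$ vanishes because $\{\gamma^5, \gamma^\mu\} = 0$ forces $[\gamma_1^5, \alpha_1^a] = 0$; and cross terms involving $\gamma_2^0$ commute with the particle-one matrices. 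The surviving contributions come from $[\gamma_1^\mu C_\mu e^{i\gamma_1^5\phi}, -i\alpha_2^b\partial_{2,b}]$, evaluated with $\partial_{2,b}\phi = 2c_b$, and from $[\gamma_1^\mu C_\mu e^{i\gamma_1^5\phi}, \gamma_1^5\alpha_2^\nu c_\nu]$, evaluated with $\gamma_1^\mu\gamma_1^5 = -\gamma_1^5\gamma_1^\mu$; collecting them via $\alpha_2^0 = \id$ yields $[H_1, H_2] = 2c_0\,\gamma_1^\mu C_\mu\gamma_1^5 e^{i\gamma_1^5\phi}$. Since $V_2$ is a constant matrix, $\partial_{1,0}V_2 = 0$; and $i\partial_{2,0}V_1 = -2c_0\,\gamma_1^\mu C_\mu\gamma_1^5 e^{i\gamma_1^5\phi}$ cancels $[H_1, H_2]$ exactly.

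For the interaction claim I argue by contradiction: suppose some unitary $U(x_1, x_2)$ transforms the system into one with external potentials $\tilde V_k(x_k)$. Substituting $\psi = U^{-1}\tilde\psi$ into \eqref{eq:multitimediracwithalpha} and multiplying by $U$ from the left, matching the coefficients of each $\partial_{k,a}\tilde\psi$ forces $U\alpha_k^a U^{-1} = \alpha_k^a$ for $k=1,2$, $a=1,2,3$, and matching the mass terms forces $U\gamma_k^0 U^{-1} = \gamma_k^0$. Since these matrices together generate the full matrix algebra on $(\C^4)^{\otimes 2} = \C^K$, Schur's lemma gives $U(x_1, x_2) = e^{i\theta(x_1,x_2)}\id$ with real $\theta$, and the gauge transformation reduces to the abelian relation $\tilde V_k = V_k + \alpha_k^\mu\partial_{k,\mu}\theta$.

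Applied to the constant $V_2$, the condition $\partial_{1,\nu}\tilde V_2 = 0$ becomes $\alpha_2^\mu\partial_{1,\nu}\partial_{2,\mu}\theta = 0$ for every $\nu$; linear independence of $\id = \alpha_2^0, \alpha_2^1, \alpha_2^2, \alpha_2^3$ in $\C^{4\times 4}$ forces every mixed derivative to vanish, so $\theta(x_1, x_2) = \theta_1(x_1) + \theta_2(x_2)$. Then $\tilde V_1(x_1) = V_1(x_1, x_2) + \alpha_1^\mu\partial_{1,\mu}\theta_1(x_1)$ can be $x_2$-independent only if $\partial_{2,\nu}V_1 = 0$ for all $\nu$. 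But $\partial_{2,\nu}V_1 = 2ic_\nu\gamma_1^\mu C_\mu\gamma_1^5 e^{i\gamma_1^5\phi}$ is nonzero whenever $c_\nu \neq 0$, since the exponential and $\gamma_1^5$ are invertible and $\gamma_1^\mu C_\mu \neq 0$ by linear independence of the $\gamma^\mu$ combined with the hypothesis that some $C_\mu$ is nonzero. As some $c_\nu$ is nonzero by assumption, this is the desired contradiction. The step I expect to be most delicate is the reduction of $U$ to a scalar phase: extracting the commutation identities $U\alpha_k^a U^{-1} = \alpha_k^a$ and $U\gamma_k^0 U^{-1} = \gamma_k^0$ cleanly from the transformed multi-time Dirac equation requires carefully separating terms of different differential order before invoking irreducibility.
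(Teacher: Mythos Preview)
Your consistency argument (Part~1) is correct and is essentially the same direct computation as the paper's, organized slightly differently: the paper keeps $m_1\gamma_1^0$ in $H_1^0$ and the cancelling $-m_1\gamma_1^0$ in $V_1$ separate, while you combine them into an effective massless $H_1$ from the outset. The remaining commutators and the cancellation against $i\partial_{t_2}V_1$ are handled the same way.

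Your interaction argument (Part~2) has a genuine gap, precisely at the step you yourself flagged as delicate. Matching the coefficients of $\partial_{k,a}\tilde\psi$ does yield $U\alpha_k^aU^{-1}=\alpha_k^a$, because first-order derivative terms cannot be absorbed into a multiplication operator. But the claimed identity $U\gamma_k^0U^{-1}=\gamma_k^0$ is \emph{not} forced: the difference $(U\gamma_k^0U^{-1}-\gamma_k^0)m_k$ is a bona fide multiplication operator and can simply be absorbed into the new potential $\tilde V_k$. Without the $\gamma_k^0$ constraint your Schur step collapses, since the commutant of $\{\alpha_1^a,\alpha_2^b:a,b=1,2,3\}$ in $\C^{K\times K}$ is not the scalars but the four-dimensional abelian algebra spanned by $\id,\gamma_1^5,\gamma_2^5,\gamma_1^5\gamma_2^5$ (this is exactly the commutator computation in the proof of Theorem~\ref{thm:ccimplies}). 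Hence $U$ need not be a pure phase $e^{i\theta}$, and the subsequent splitting $\theta=\theta_1(x_1)+\theta_2(x_2)$ is not available. There is also a broader issue: the paper's notion of gauge equivalence (and its own use of it, e.g.\ in Lemma~\ref{thm:lemmacc}) allows $U=e^{iM}$ with matrix-valued $M$, the transformed equation then carrying conjugated matrices $\tilde\gamma=U\gamma U^{-1}$ which still form a valid Clifford representation; under that reading even $U\alpha_k^aU^{-1}=\alpha_k^a$ is too strong a conclusion.

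The paper avoids the Schur reduction entirely. It works with an arbitrary self-adjoint matrix-valued $M$ and shows that the requirement ``$V_k-\alpha_k^\mu\partial_{k,\mu}M$ depends only on $x_k$'' is self-contradictory: the mixed contraction $\alpha_1^\lambda\alpha_2^\delta\,\partial_{1,\lambda}\partial_{2,\delta}M$ is evaluated in two orders, once via $\partial_{1,\lambda}(\alpha_2^\delta\partial_{2,\delta}M)=0$ (from the $k=2$ condition, since $V_2$ is constant) and once via $\partial_{2,\delta}(\alpha_1^\lambda\partial_{1,\lambda}M)=\partial_{2,\delta}V_1\neq 0$ (from the $k=1$ condition), giving the contradiction directly.
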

This is proven in Section \ref{sec:consistentexample}. With this example at hand, it becomes clear that we cannot prove inconsistency of arbitrary interacting potentials. But obviously, the potential $V_1$ in \eqref{eq:hoho} is not Lorentz invariant. Since the use of multi-time equations aims at a relativistic formulation of quantum mechanics, it is natural to require Poincar{\'e} invariance, i.e.\ Lorentz invariance and translation invariance, of the potentials. We show that the latter excludes the former by finding that every translation invariant potential has to be of a certain shape.

\begin{lemma} \label{thm:potentialsform}
Suppose the assumptions $\mathbf{(A)}$ of theorem \ref{thm:ccimplies} hold. If, in addition, the potentials are both interacting and translation invariant, i.e. satisfy
\begin{equation}
V_k (x_1, x_2) = V_k (x_1 + a, x_2 + a) ~~ \forall a \in \R^4,
\end{equation}
then they are necessarily of the form 
\begin{equation} \label{eq:potentialsform}
V_k = M_1 e^{c_{k,\nu} x^\nu} + M_2 e^{-c_{k, \nu} x^\nu} + const.\
\end{equation}
for some $M_1, M_2 \in \C^{K \times K}$ and $c_k \in \C^4$, where $x = x_1 - x_2$. 
\end{lemma}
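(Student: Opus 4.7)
The plan is to combine the structural restriction from Theorem~\ref{thm:ccimplies}, the explicit consistency system from Lemma~\ref{thm:lemmacc}, and translation invariance in order to reduce the problem to a system of linear first-order PDEs with constant coefficients for the scalar coefficient functions of $V_k$, whose only solutions are exponentials in $x = x_1 - x_2$.

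First I would exploit translation invariance to write each $V_k$ as a function of the single relative coordinate $x := x_1 - x_2 \in \R^4$. This yields the key identity $\partial_{1,\mu} V_k = -\partial_{2,\mu} V_k$, which cuts the number of independent derivative directions appearing in the consistency condition in half and, in particular, replaces each $\partial V_k / \partial t_j$ by $\pm \partial V_k / \partial t$ with $t = t_1^0 - t_2^0$.

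Second, by Theorem~\ref{thm:ccimplies}, the $k$-th spin slot of $V_j$ for $k \neq j$ is spanned by $\id_k$ and $\gamma^5_k$, so I can write
\begin{equation*}
V_1(x) = A_0(x) \otimes \id_2 + A_5(x) \otimes \gamma_2^5, \qquad V_2(x) = \id_1 \otimes B_0(x) + \gamma_1^5 \otimes B_5(x),
\end{equation*}
with $4\times 4$ matrix-valued coefficient functions $A_0,A_5,B_0,B_5$, each of which I would further expand in a Clifford basis of the remaining slot. Substituting this ansatz together with the translation-invariant derivative identity into the explicit scalar system \eqref{eq:cc1}--\eqref{eq:cc16} turns the consistency condition into a finite system of linear PDEs with constant coefficients in the four variables $x^\mu$ for the scalar components.

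Third, I would show that this scalar system decouples into pairs of first-order equations of the schematic form $\partial_\mu f = c_{k,\mu}\, g$ and $\partial_\mu g = c_{k,\mu}\, f$, which combine to $\partial_\mu^2 f = c_{k,\mu}^2\, f$ (no sum) in each direction. Compatibility of these four single-variable second-order equations forces the coefficients $c_{k,\mu}$ to be the components of a single $4$-vector $c_k \in \C^4$, and the general solution is $f(x) = \alpha\, e^{c_k \cdot x} + \beta\, e^{-c_k \cdot x} + \gamma$, which is precisely \eqref{eq:potentialsform}. Finally, the interacting assumption rules out the degenerate case where only the constant term survives, since constants form exactly the family of external, hence gauge-equivalent, potentials.

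The main obstacle I anticipate is the third step: the consistency system of Lemma~\ref{thm:lemmacc} cross-couples many scalar components across the Clifford basis, and reliably disentangling them to exhibit the first-order coupling $\partial_\mu f = c_{k,\mu} g$ with the paired exponents $\pm c_k$ demands careful bookkeeping. A secondary technical point is to ensure that the exponent is genuinely a common $4$-vector $c_k$ in all four directions rather than an unrelated constant per direction, which should follow from the over-determined nature of the system of mixed partials once one imposes that a common solution exists.
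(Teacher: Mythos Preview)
Your overall strategy matches the paper's: use Lemma~\ref{thm:lemmacc} to reduce to the scalar system \eqref{eq:cc1}--\eqref{eq:cc16}, then combine translation invariance with the first-order relations \eqref{eq:cc5}--\eqref{eq:cc12} to obtain second-order equations whose solutions are exponentials in $x=x_1-x_2$. However, there is a genuine gap at the point where you assert that translation invariance alone yields a system with \emph{constant} coefficients. In \eqref{eq:cc5}--\eqref{eq:cc12} the coefficients multiplying $A_\mu,\dots,D_\mu$ are the functions $Y_{2,\nu},Z_{2,\nu}$ (and analogously $X_{1,\mu},Z_{1,\mu}$ on the other side), and translation invariance by itself only says these are functions of $x_1-x_2$, not that they are constant. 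The missing step, which the paper carries out explicitly, is the gauge argument built into Lemma~\ref{thm:lemmacc}: equations \eqref{eq:cc1}--\eqref{eq:cc4} show that the $W_i,X_i,Y_i,Z_i$ part is gauge-equivalent to a purely external potential depending on $x_i$ alone, and only after performing that gauge transformation does translation invariance force $Y_{2,\nu},Z_{2,\nu},X_{1,\mu},Z_{1,\mu}$ to be constants. Without this step your PDE system is genuinely variable-coefficient and the exponential ansatz is not justified.

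Two smaller points. First, the block \eqref{eq:cc5}--\eqref{eq:cc8} couples the four functions $A_\mu,B_\mu,C_\mu,D_\mu$, not a single pair $f,g$; the paper differentiates once more to obtain
\[
\tfrac14\,\partial_{2,\nu}\partial_{2,\lambda} B_\mu=(Z_{2,\lambda}Z_{2,\nu}-Y_{2,\lambda}Y_{2,\nu})B_\mu+(Y_{2,\nu}Z_{2,\lambda}-Y_{2,\lambda}Z_{2,\nu})D_\mu,
\]
and it is the symmetry of the left-hand side in $\nu,\lambda$ (Schwarz) that kills the mixing term and actually decouples the system, giving the common exponent $c_{\nu}=2\sqrt{Z_{2,\nu}^2-Y_{2,\nu}^2}$. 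You relegate this mixed-partial argument to a ``secondary technical point'', but it is what does the work. Second, the equation $\partial_\mu^2 f=c_{k,\mu}^2 f$ admits no constant solution when $c_k\neq 0$; the ``$+\,\mathrm{const.}$'' in \eqref{eq:potentialsform} does not come from your ODE but from the external piece $V_{k,\mathrm{ext}}$ and the mass shift in $A_0,E_0$.
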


A slightly stronger version of this lemma will be formulated and proven in Section \ref{sec:consistentpotentials}. Our main theorem \ref{thm:main} can then be proven by a simple collection of facts:

\begin{proofT} \noindent
\begin{itemize}
\item First case: $\Omega = \R^{4N}$. Suppose a system \eqref{eq:intromultsys} with potentials $V_k \in C^1(\R^{4N},\C^{K \times K})$ that are interacting has a solution $\psi \in C^2(\R^{4N},\C^K)$ for all initial values $\varphi \in C^\infty_c (\R^{3N},\C^K)$. Consequently, by Proposition \ref{thm:ccdifferentiable}, the consistency condition \eqref{eq:ccinitial} has to be true for all $\varphi \in C^\infty_c (\R^{3N},\C^K)$. Then, by Lemma \ref{thm:potentialsform}, if the potentials are translation invariant, they are of the form \eqref{eq:potentialsform}, which is not Lorentz invariant. Therefore, the potentials cannot be Poincar{\'e} invariant.
\item Second case: $\Omega = \mathscr{S}^{(N)}$. The proof for the domain $\mathscr{S}^{(N)}$ goes through as above because the necessary lemmas were all proven for general domains that are open and simply connected, which is true for $\mathscr{S}^{(N)}$.
\end{itemize} \qed
\end{proofT}

Under the assumptions on higher regularity of solutions, we have thus generalized the results of Petrat and Tumulka \cite{multitimeNoPotentials} in the sense that our theorem \ref{thm:main} covers arbitrary multiplication operators with spin-coupling. The class of potentials that are consistent and translation invariant (equation \eqref{eq:potentialsform}) does not contain any physically interesting potentials, but only potentials that oscillate with the distance of the particles. That these are not Lorentz invariant further motivates to disregard them because multi-time equations are intended for a fully and manifest Lorentz invariant formulation of quantum mechanics.  The implications of this result for the formulation of interacting relativistic quantum mechanics were discussed above in the introduction.

\section{Proof of the consistency condition} \label{sec:cc}

\subsection{Proof of Proposition \ref{thm:ccdifferentiable}} \label{sec:proofccdifferentiable}

\begin{proof}[Proof of Proposition \ref{thm:ccdifferentiable}:] Suppose $\psi \in C^{2} (\Omega,\C^K)$ solves the equations \eqref{eq:intromultsys}. Let $j \neq k$. By the theorem of Schwarz, the time-derivatives on $\psi$ commute, which for $X \in \Omega$ gives:
\begin{align}
   \left( i \partial_{t_k} i \partial_{t_j} - i \partial_{t_j} i \partial_{t_k}\right) \psi = 0 
 \Rightarrow & ~ i \partial_{t_k} \left( H_j \psi  \right) - i \partial_{t_j} \left( H_k \psi  \right)  = 0 \label{eq:secondine}
\\ \Rightarrow & ~ H_j  i \partial_{t_k} \psi + \left(  i \partial_{t_k} V_j \right) \psi - \left(  i \partial_{t_j} V_k \right) \psi - H_k i \partial_{t_j} \psi  = 0 \label{eq:thirdline}
\\ \Rightarrow & ~ \left( H_j H_k +  \left(  i \partial_{t_k} V_j \right) - \left(  i \partial_{t_j} V_k \right) - H_k H_j \right) \psi  = 0. \label{eq:fourthline}
\end{align}
In \eqref{eq:secondine} and \eqref{eq:fourthline}, we used that $\psi$ solves the multi-time equations \eqref{eq:intromultsys}, and \eqref{eq:thirdline} follows by the product rule. As $X \in \Omega$ was arbitrary, equation \eqref{eq:ccversion2} holds on $\Omega$, as claimed. \qed
\end{proof}

\begin{remark}
\noindent
\begin{enumerate}
\item 
The assumption that the solution $\psi$ is at least twice differentiable in the time direction seems unproblematic because the spatial smoothness of initial data is usually inherited in the time direction due to the nature of physically relevant evolution equations. E.g.\ for the one-particle Dirac equation with smooth external electromagnetic potential $A_\mu$, it was proven in \cite{Deckert2014} that solutions that are smooth on one (space-like) Cauchy surface are indeed smooth on all of $\R^4$. 
\item This theorem even covers relativistic Coulomb potentials because for the domain $\Omega = \mathscr{S}^{(N)}$, a potential of the form
\begin{equation}
V \sim \frac{1}{(t_k - t_j)^2 - |\mathbf{x}_k - \mathbf{x}_j|^2}
\end{equation}
is singular only outside of $\mathscr{S}^{(N)}$, which ensures that $V \in C^{\infty} (\mathscr{S}^{(N)},\C^{K \times K})$.
\end{enumerate}
\end{remark}

\subsection{Geometric view of the consistency condition} \label{sec:geometric}
In this section, we discuss on a non-rigorous level how the results on the consistency condition can be reformulated with the help of differential geometry (compare Section 2.3 in \cite{multitimeNoPotentials}). For each multi-time argument $(t_1,...,t_N)$, the multi-time wave function will be an element of the Hilbert space $\mathscr{H} = L^2(\R^{3N},\C^K)$. We can define a vector bundle $E$ over the base manifold $\R^N$ with identical fibres $\mathscr{H}$ at every point. (This is therefore a trivial vector bundle $E = \R^N \times \mathscr{H}$). A multi-time wave function is then a section of $E$. 
\\ A natural notion of parallel transport on $E$ can be given by the single-time evolution operators $U_k(t_k)$ (which would be $e^{-iH_kt_k}$ for time-independent $H_k$). This means that we define a connection $\nabla$ on $E$ with components $\nabla_k = \partial_{t_k} + i H_k$, whereby the parallel transport in direction $t_k$ is given by $U_k$. Solutions of \eqref{eq:intromultsys} are then sections that are covariantly constant, i.e.\ satisfy $\nabla \psi=0$.
\\ The well-definedness of solutions requires that the parallel transport along a closed curve does not change the vector. So we need that for any loop $\gamma$, $U_\gamma = \id$. This is equivalent to saying that the vector bundle has a trivial holonomy group, $\mathrm{Hol}(\nabla) = \{ \id \}$. By the theorem of Ambrose and Singer \cite{ambrosesinger}, the holonomy group is in direct correspondence to the curvature form $F(\nabla)$; in particular: $\mathrm{Hol}(\nabla) = \{ \id \} \Leftrightarrow F(\nabla)=0.$ Therefore, the existence of a well-defined solution implies that $\nabla$ is a flat curvature for $E$. By the formula for calculating the curvature from the connection, this means
\begin{equation}
0 = F_{ij} = \frac{\partial H_i}{\partial t_j} -  \frac{\partial H_j}{\partial t_i} - i [ H_i, H_j],
\end{equation}
which is the consistency condition.

\section{Spin-coupling potentials}
\label{sec:spinpotentials}

\subsection{Proof of Theorem \ref{thm:ccimplies}}
\label{sec:possiblespins}

\begin{proof}[Proof of Theorem \ref{thm:ccimplies}:]
We start with a system \eqref{eq:multitimediracwithalpha} and evaluate the consistency condition \eqref{eq:ccinitial}. Let $k \neq j$, then:
\begin{align}
& \left[ i \alpha^\mu_k \partial_{k,\mu} - \gamma_k^0 m_k - V_k ,   i \alpha^\nu_j \partial_{j,\nu}- \gamma_j^0 m_j - V_j \right]
\\ = & \left[i \alpha^\mu_k \partial_{k,\mu} - \gamma^0_k m_k, -V_j \right]
+ \left[ -V_k, i \alpha^\nu_j \partial_{j,\nu} - \gamma^0_j m_j \right] + \left[ V_k, V_j \right] \label{eq:blablabla}
\\ = & \left[ V_k, V_j \right] + m_k \left[ \gamma_k^0, V_j \right] - m_j \left[ \gamma_j^0, V_k \right] -i \left[ \alpha^\mu_k \partial_{k,\mu}, V_j \right] +i
\left[ \alpha^\nu_j \partial_{j,\nu}, V_k \right]
\end{align}
In \eqref{eq:blablabla}, we used that the derivatives w.r.t.\ different coordinates commute by Schwarz. We consider the last term in more detail:
\begin{align}
i \left[ \alpha^\nu_j \partial_{j,\nu}, V_k \right] & = i \alpha^\nu_j \partial_{j,\nu} V_k - i V_k \alpha^\nu_j \partial_{j,\nu} \nonumber
\\ & = i \alpha^\nu_j  \left( \partial_{j,\nu}V_k \right) + i \alpha^\nu_j V_k  \partial_{j,\nu} - i V_k  \alpha^\nu_j \partial_{j,\nu} \nonumber
\\ & =   i \alpha^\nu_j  \left( \partial_{j,\nu}V_k \right) + i \left[ \alpha^\nu_j, V_k \right] \partial_{j,\nu} \nonumber
\\ & =   i \alpha^\nu_j  \left( \partial_{j,\nu}V_k \right)+ i \sum_{a=1}^3 \left[\alpha^a_j, V_k \right] \partial_{j,a},
\end{align}
where in the last line, the summand with $\nu = 0$ was dropped because $\alpha^0 = \id$ commutes with everything. Doing the same for the second last term yields that the consistency condition is equivalent to
\begin{align} \label{eq:explicitconscon}
0 = & \left[ V_k, V_j \right] + m_k \left[ \gamma_k^0, V_j \right] - m_j \left[ \gamma_j^0, V_k \right]  \nonumber
\\ - & i \alpha^\mu_k  \left( \partial_{k,\mu}V_j \right) +  i \alpha^\nu_j  \left( \partial_{j,\nu}V_k \right)  \nonumber
\\ - & i \sum_{a=1}^3 \left[ \alpha^a_k, V_j \right] \partial_{k,a} +  i \sum_{a=1}^3 \left[ \alpha^a_j, V_k \right] \partial_{j,a}.
\end{align}
The derivatives in \eqref{eq:explicitconscon} are in some sense linearly independent, which is made clear in the following auxiliary claim.
\begin{lemma}
Let $U \subseteq \R^{3N}$ be open. Let $f: U \rightarrow \C^K$ be a function and suppose there are complex $K \times K$-matrices $\Lambda_{k,j}(\mathbf{x}_1,...,\mathbf{x}_N)$ such that
\begin{equation} \label{eq:lemmalin}
\left( f(\mathbf{x}_1,...,\mathbf{x}_N) + \sum_{k=1}^{N} \sum_{j=1}^{3} \Lambda_{k,j} \frac{\partial}{\partial x_k^j} \right) \varphi(\mathbf{x}_1,..,.\mathbf{x}_N) = 0, \quad \forall (\mathbf{x}_1,...,\mathbf{x}_N) \in U,
\end{equation}  
holds for all $\varphi \in C^{\infty}_c (U,\C^K)$. Then, for all $j$ and $k$, $\Lambda_{k,j}(\mathbf{x}_1,...,\mathbf{x}_N) =0$, and $f$ must be the zero function.
\end{lemma}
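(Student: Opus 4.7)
The plan is to test the hypothesis against a family of compactly supported smooth functions rich enough to independently prescribe the value and each first partial derivative of $\varphi$ at an arbitrary point $\mathbf{x}^* \in U$. Since $U$ is open, I can fix a bump function $\chi \in C^\infty_c(U,\R)$ with $\chi(\mathbf{x}^*) = 1$ and $\partial_{x_k^j}\chi(\mathbf{x}^*) = 0$ for every $k,j$; such $\chi$ exists by a standard mollifier construction on a small ball around $\mathbf{x}^*$.

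For any choice of vectors $v_0, v_{k,j} \in \C^K$ I would then set
\begin{equation}
\varphi(\mathbf{x}_1,\dots,\mathbf{x}_N) := \Bigl( v_0 + \sum_{k=1}^{N}\sum_{j=1}^{3}\bigl(x_k^j - (x^*)_k^j\bigr) v_{k,j} \Bigr)\,\chi(\mathbf{x}_1,\dots,\mathbf{x}_N),
\end{equation}
which lies in $C^\infty_c(U,\C^K)$ and, thanks to the vanishing first derivatives of $\chi$ at $\mathbf{x}^*$, satisfies $\varphi(\mathbf{x}^*) = v_0$ and $\partial_{x_k^j}\varphi(\mathbf{x}^*) = v_{k,j}$. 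Substituting this $\varphi$ into \eqref{eq:lemmalin} and evaluating at $\mathbf{x}^*$ collapses the entire expression to the single pointwise identity
\begin{equation}
f(\mathbf{x}^*)\,v_0 + \sum_{k=1}^{N}\sum_{j=1}^{3} \Lambda_{k,j}(\mathbf{x}^*)\,v_{k,j} = 0,
\end{equation}
where the product with $f(\mathbf{x}^*)$ is read matrix-vector as needed for the intended application to \eqref{eq:explicitconscon}.

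Now I would disentangle the two groups of coefficients: setting all $v_{k,j} = 0$ and letting $v_0$ range over $\C^K$ forces $f(\mathbf{x}^*) = 0$; then setting $v_0 = 0$ and letting a single $v_{k,j}$ at a time run over a basis of $\C^K$ forces $\Lambda_{k,j}(\mathbf{x}^*) = 0$ for every pair $(k,j)$. Since $\mathbf{x}^* \in U$ was arbitrary, $f \equiv 0$ and $\Lambda_{k,j} \equiv 0$ on $U$, which is the stated conclusion. There is no serious obstacle in this argument; the only step that warrants a sentence of care is the existence of $\chi$ with vanishing first derivatives at the base point, and this requires nothing beyond openness of $U$.
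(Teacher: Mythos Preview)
Your argument is correct and follows essentially the same route as the paper's proof: both exploit that the value and each first partial derivative of a $C^\infty_c$ test function at a given point of $U$ can be prescribed independently, which forces each coefficient of the operator to vanish pointwise. Your version is slightly more explicit (constructing the cutoff $\chi$ with vanishing gradient and running $v_0,\,v_{k,j}$ over a basis of $\C^K$), whereas the paper simply asserts the existence of a suitable $\varphi$ with $\varphi(\mathbf{X})=0$ and a single nonzero partial derivative; but the substance is the same.
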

\begin{proof}[Proof of the Lemma:]
We choose some fixed $k$ and $j$ and show that $\Lambda_{k,j} =0$ first. Pick some point $(\mathbf{x}_1,...,\mathbf{x}_N)=\mathbf{X} \in U$. There exists $\varphi \in C^{\infty}_c (U,\C^K)$ with the property that $\varphi(\mathbf{X}) = 0$ and $\partial_{l,m} \varphi(\mathbf{X}) = \delta_{lk} \delta_{mj}$. Thus, evaluating \eqref{eq:lemmalin} at the point $\mathbf{X}$, we have
\begin{equation}
0 = f(\mathbf{X}) \varphi(\mathbf{X}) + \sum_{l=1}^{N} \sum_{m=1}^{3} \Lambda_{l,m}(\mathbf{X})  \delta_{lk} \delta_{mj} = \Lambda_{k,j}(\mathbf{X})
\end{equation}
Because all factors $\Lambda_{k,j}$ are equal to zero, eq.\ \eqref{eq:lemmalin} directly implies that $f$ is the zero function. \qed
\end{proof}
Applying this lemma to the consistency condition \eqref{eq:explicitconscon}, we obtain that the prefactors of the derivative terms have to vanish separately, which means
\begin{equation} \label{eq:commutatoralpha}
\left[ \alpha^a_j, V_k \right]=0, \quad \forall k \neq j , ~ \forall a \in \{ 1,2,3\}.
\end{equation}
This will give us the desired constraint on the matrix structures that may appear in each $V_k$. We note that the following matrices form a basis of the complex $4 \times 4$ matrices (for a proof see e.g.\ \cite[p.~53ff.]{Bogoliubov}) : 
\begin{equation} \label{eq:basis}
\alpha^\mu, \quad \gamma^5 \alpha^\mu, \quad \gamma^\mu,\quad \gamma^5 \gamma^\mu ,\qquad \mu= 0,1,2,3.
\end{equation}
Although the matrix $V_k$ is a tensor product of $N$ $4 \times 4$-matrices, we can disregard all factors of the tensor product apart from the $j$-th to check when the condition \eqref{eq:commutatoralpha} can be satisfied. We can express $V_k$ in the above basis and just compute all commutators of $\alpha^a$ with basis elements. The following list, where we omit the index $j$, results:
\begin{align}
\left[ \alpha^a, \alpha^0 \right] & = 0 \nonumber
\\ \left[ \alpha^a, \alpha^b \right] & = 2 \gamma^a \gamma^b = -2i \varepsilon_{abc} \gamma^5 \alpha^c \nonumber
\\ \left[ \alpha^a, \gamma^5 \alpha^0 \right] & = 0 \nonumber
\\ \left[ \alpha^a,  \gamma^5 \alpha^b \right] & = (2 - 2\delta^{ab}) \gamma^5 \gamma^b \gamma^a = 2 i \varepsilon_{abc}  \alpha^c \nonumber
\\ \left[ \alpha^a, \gamma^0 \right] & = -2 \gamma^a \nonumber
\\ \left[ \alpha^a, \gamma^b \right] & = -2 \delta^{ab} \gamma^0 \nonumber
\\ \left[ \alpha^a, \gamma^5 \gamma^0 \right] & = -2 \gamma^5 \gamma^a \nonumber
\\ \left[ \alpha^a, \gamma^5 \gamma^b \right] & = -2 \delta^{ab} \gamma^5 \gamma^0.
\end{align}
If $V_k$ contains combinations of $\alpha^0_j = \id_j$ and $\gamma^5_j$, the commutators in \eqref{eq:explicitconscon} vanish. But the commutators with all other elements of the basis give non-zero and linearly independent matrices, which implies that other matrices cannot be present in $V_k$ in order for condition \eqref{eq:commutatoralpha} to be fulfilled.  \qed
\end{proof}

\subsection{Basis decomposition} \label{sec:basisdecomp}

By theorem \ref{thm:ccimplies}, the consistency condition implies that $V_k$ only depends on the spin of the $j$-th particle via the identity matrix or $\gamma^5_j$. Therefore, we can expand the potentials as
\begin{align}
V_1 & = \id_2 V_{11} + \gamma^5_2 V_{15}, \nonumber
\\ V_2 &= \id_1 V_{21} + \gamma^5_1 V_{25}.
\end{align}
In the terms $V_{i1}$ and $V_{i5}$, all matrices depending on the $i$-th spin index may appear in principle, so we have
 \begin{align}
V_{11}  & = \alpha_1^\mu W_{1,\mu}  + \gamma^5_1 \alpha_1^\mu Y_{1,\mu} + \gamma_1^\mu A_\mu + \gamma^5_1 \gamma_1^\mu B_\mu  \nonumber \\
V_{15}  & = \alpha_1^\mu X_{1,\mu} + \gamma^5_1 \alpha_1^\mu Z_{1,\mu} + \gamma_1^\mu C_\mu + \gamma^5_1 \gamma_1^\mu D_\mu  \nonumber \\
V_{21}  & = \alpha_2^\nu W_{2,\nu} + \gamma^5_2 \alpha_2^\nu X_{2,\nu}+ \gamma_2^\nu E_\nu + \gamma^5_2 \gamma_2^\nu F_\nu \nonumber  \\
V_{25} & = \alpha_2^\nu Y_{2,\nu} + \gamma^5_2 \alpha_2^\nu Z_{2,\nu} + \gamma_2^\nu G_\nu + \gamma^5_2 \gamma_2^\nu H_\nu , \label{eq:expansionofpotentials}
\end{align}
where $A_0, B_k, C_0, D_k, E_0, F_k, G_0, H_k, W_{i,\mu},X_{i,\mu},Y_{i,\mu},Z_{i,\mu}$ are arbitrary real scalar functions and $A_k, B_0, C_k, D_0, E_k, F_0, G_k, H_0$ are arbitrary functions with purely imaginary values, such that the potentials are self-adjoint. It will soon become understandable why this nomenclature makes sense, especially what $W_1,X_1,Y_1,Z_1$ have to do with $W_2,X_2,Y_2,Z_2$.

\begin{lemma} \label{thm:lemmacc}
Consider a multi-time system \eqref{eq:intromultsys} for two particles for which the assumption \textbf{(A)} of Theorem \ref{thm:ccimplies} holds. Then the potentials can be expanded as
\begin{align} \label{eq:joerg}
V_1 &=   \gamma_1^\mu A_\mu + \gamma^5_1 \gamma_1^\mu B_\mu + \gamma_2^5 \left( \gamma_1^\mu C_\mu + \gamma^5_1 \gamma_1^\mu D_\mu \right) + V_{1,ext}
\\ V_2 &=\gamma_2^\nu E_\nu + \gamma^5_2 \gamma_2^\nu F_\nu + \gamma^5_1 \left( \gamma_2^\nu G_\nu + \gamma^5_2 \gamma_2^\nu H_\nu \right) + V_{2,ext} \label{eq:heike}
\end{align}
where $V_{i,ext}$ is not interacting and the functions $A_\mu$ to $H_\mu$, $\mu=0,1,2,3$, are scalars. 
\\ Furthermore, the consistency condition is equivalent to the following system of equations:
\begin{subequations}
\begin{align} 
 \partial_{1, \mu} W_{2, \nu}&= \partial_{2, \nu} W_{1, \mu}  \label{eq:cc1} \\
\partial_{1, \mu} X_{2, \nu} &= \partial_{2, \nu} X_{1, \mu}   \label{eq:cc2}\\  
\partial_{1, \mu} Y_{2, \nu} &= \partial_{2, \nu} Y_{1, \mu}  \label{eq:cc3}
 \\  \partial_{1, \mu} Z_{2, \nu} &= \partial_{2, \nu} Z_{1, \mu}  \label{eq:cc4}
\\  \label{eq:cc5}
B_\mu Y_{2, \nu} +  D_\mu Z_{2, \nu}&=  \tfrac{i}{2} \partial_{2,\nu} A_\mu   
\\  \label{eq:cc6} (m_1 \delta_{0 \mu } +A_\mu) Y_{2, \nu} + C_\mu Z_{2, \nu}&=  \tfrac{i}{2} \partial_{2,\nu}  B_\mu 
\\  \label{eq:cc7} -B_\mu Z_{2, \nu} -  D_\mu Y_{2, \nu} &=  \tfrac{i}{2} \partial_{2,\nu} C_\mu 
 \\ -(m_1  \delta_{0 \mu} +A_\mu) Z_{2, \nu} -  C_\mu Y_{2, \nu}&=  \tfrac{i}{2} \partial_{2,\nu} D_\mu \label{eq:cc8}  \\  \label{eq:cc9}
 F_\nu X_{1, \mu} +  H_\nu Z_{1, \mu}&= \tfrac{i}{2} \partial_{1, \mu} E_\nu 
 \\ (m_2 \delta_{0\nu} +  E_\nu) X_{1, \mu}  +  G_\nu Z_{1, \mu} &= \tfrac{i}{2} \partial_{1, \mu} F_\nu  \label{eq:cc10}
 \\  -F_\nu Z_{1,\mu}- H_\nu X_{1,\mu}&= \tfrac{i}{2} \partial_{1, \mu} G_\nu   \label{eq:cc11}
 \\ -(m_2 \delta_{0\nu} +  E_\nu) Z_{1, \mu}   -  G_\nu X_{1, \mu} &=  \tfrac{i}{2} \partial_{1, \mu} H_\nu  \label{eq:cc12}
\\ B_\mu G_\nu &=  C_\mu F_\nu   \label{eq:cc13}
\\   B_\mu H_\nu &= C_\mu (m_2\delta_{0\nu} + E_\nu )  \label{eq:cc14}
 \\ (m_1  \delta_{0 \mu } +A_\mu) G_\nu &= D_\mu F_\nu  \label{eq:cc15}
 \\ (m_1\delta_{0 \mu} + A_\mu ) H_\nu &=  D_\mu (m_2 \delta_{0\nu}+ E_\nu)  \label{eq:cc16}
\end{align}
\end{subequations}
\end{lemma}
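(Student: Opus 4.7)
The plan is to derive both claims by substituting the basis expansion \eqref{eq:expansionofpotentials} of $V_1,V_2$ into the reduced consistency condition \eqref{eq:explicitconscon} and matching coefficients in the tensor-product basis of $\C^{16\times 16}$ built from the sixteen matrices \eqref{eq:basis}. Using $\alpha^0=\id$, $[\gamma^5,\alpha^\mu]=0$, $[\gamma^0,\alpha^a]=2\gamma^a$ for spatial $a$, $\{\gamma^5,\gamma^\mu\}=0$, and $\{\gamma^\mu,\gamma^\nu\}=2g^{\mu\nu}\id$, every product in \eqref{eq:explicitconscon} reduces to a linear combination of these basis elements, and the constraints sort themselves into three families. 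The coefficients of the four double-$\alpha$ structures $\alpha_1^\mu\alpha_2^\nu$, $\gamma_1^5\alpha_1^\mu\alpha_2^\nu$, $\alpha_1^\mu\gamma_2^5\alpha_2^\nu$, $\gamma_1^5\alpha_1^\mu\gamma_2^5\alpha_2^\nu$ receive contributions only from the cross-derivative terms $-i\alpha_k^\mu(\partial_{k,\mu}V_j)+i\alpha_j^\nu(\partial_{j,\nu}V_k)$ acting on the $\alpha$-pieces of $V_j,V_k$, producing the integrability identities \eqref{eq:cc1}-\eqref{eq:cc4}. Coefficients of mixed $\gamma$-$\alpha$ structures combine the derivatives $\partial_{k,\mu}A,\dots,\partial_{k,\mu}H$ with the $\alpha$-$\gamma^\mu$ cross-products and with contributions from the mass commutators $m_k[\gamma_k^0,V_j]$, yielding \eqref{eq:cc5}-\eqref{eq:cc12}. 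Coefficients of double-$\gamma$ structures $\gamma_1^{\,\cdot}\gamma_2^{\,\cdot}$ are supplied only by $[V_1,V_2]$ together with the mass commutators and give the algebraic relations \eqref{eq:cc13}-\eqref{eq:cc16}. The converse direction is immediate: reassembling the sixteen scalar equations weighted by their basis elements reconstitutes \eqref{eq:explicitconscon}.

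For the decomposition \eqref{eq:joerg}-\eqref{eq:heike} I would isolate from \eqref{eq:expansionofpotentials} the $\alpha$-pieces $V_{k,\alpha}=\alpha_k^\mu(W_{k,\mu}+\gamma_k^5 Y_{k,\mu}+\gamma_{\bar k}^5 X_{k,\mu}+\gamma_k^5\gamma_{\bar k}^5 Z_{k,\mu})$, with $\bar k$ the other particle. The relations \eqref{eq:cc1}-\eqref{eq:cc4} supply exactly the mixed-particle integrability needed; differentiating each of them once more reveals that the $x_k$-curls of $W_{k,\mu},X_{k,\mu},Y_{k,\mu},Z_{k,\mu}$ are automatically independent of $x_{\bar k}$. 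On the simply connected $U$ this permits a staggered line-integral construction of four scalar functions $\varphi,\chi,\psi,\omega\in C^2(U)$ such that $\partial_{k,\mu}\varphi+W_{k,\mu}$, $\partial_{k,\mu}\chi+X_{k,\mu}$, $\partial_{k,\mu}\psi+Y_{k,\mu}$, $\partial_{k,\mu}\omega+Z_{k,\mu}$ all depend only on $x_k$. I then form the unitary $\mathcal U=\exp(-i(\varphi+\chi\gamma_2^5+\psi\gamma_1^5+\omega\gamma_1^5\gamma_2^5))$, which commutes with every $\alpha_l^\mu$; conjugation by $\mathcal U$ shifts each $V_k$ by $\alpha_k^\mu\partial_{k,\mu}$ of the exponent, absorbing $V_{k,\alpha}$ into a potential depending only on $x_k$. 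Lumping the residues, together with the $\gamma^\mu$-sector contributions produced by the rotation of the mass term $\gamma_k^0 m_k$ under the $\gamma^5$-factors in $\mathcal U$, into $V_{k,ext}$ gives the decomposition and establishes non-interacting character in the sense of the definition of Section~\ref{sec:results}.

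The principal obstacle is bookkeeping: expanding \eqref{eq:explicitconscon} in the tensor basis generates several hundred scalar relations whose reorganization into exactly the sixteen listed equations requires careful tracking of signs from $\{\gamma^5,\gamma^\mu\}=0$ and from the mass commutators. A secondary, more delicate, point is that \eqref{eq:cc1}-\eqref{eq:cc4} give only the mixed-particle integrability and not closedness within each particle's variables separately, so the four scalar gauge functions $\varphi,\chi,\psi,\omega$ emerge from a staggered two-step integration rather than from a direct Poincar\'e lemma; one must then verify that the mass-term rotation induced by $\{\gamma^5,\gamma^0\}=0$ contributes only to the already-present $\gamma^\mu$-sector (that is, to $A_0,B_0,C_0,D_0$ and their $V_2$ counterparts $E_0,F_0,G_0,H_0$) and does not generate new structures that could leak out of $V_{k,ext}$.
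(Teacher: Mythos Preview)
Your proposal is correct and follows essentially the same route as the paper. For the sixteen scalar equations, the paper does exactly what you describe: it inserts the expansion \eqref{eq:expansionofpotentials} into the reduced consistency condition, computes $[V_1,V_2]$, the mass commutators $m_k[\gamma_k^0,V_j]$, and the derivative terms separately, and then reads off the prefactors of each of the $16\times 16$ basis elements $\alpha_1^{\,\cdot}\alpha_2^{\,\cdot},\dots,\gamma_1^5\gamma_1^{\,\cdot}\gamma_2^5\gamma_2^{\,\cdot}$ in a table.

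For the decomposition \eqref{eq:joerg}--\eqref{eq:heike}, the paper's argument is the same as yours up to packaging: instead of treating the four scalars $W,X,Y,Z$ separately, it bundles them into a single matrix-valued coefficient $f_{j,\mu}=W_{j,\mu}+\gamma_2^5 X_{j,\mu}+\gamma_1^5 Y_{j,\mu}+\gamma_1^5\gamma_2^5 Z_{j,\mu}$, so that \eqref{eq:cc1}--\eqref{eq:cc4} become the single relation $\partial_{1,\mu}f_{2,\nu}=\partial_{2,\nu}f_{1,\mu}$. What you call the ``staggered two-step integration'' is then carried out, following Petrat--Tumulka, by fixing a base point $\tilde x_i$, setting $\tilde f_{j,\mu}(x_j):=f_{j,\mu}(x_j,\tilde x_i)$, and observing that the remainder $h_{j,\mu}=f_{j,\mu}-\tilde f_{j,\mu}$ is closed in \emph{all} eight variables (the own-particle curl vanishes because the cross relation forces $g_{j,\mu\nu}=\partial_{j,\mu}f_{j,\nu}-\partial_{j,\nu}f_{j,\mu}$ to depend on $x_j$ alone). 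The resulting $M$ with $h_{j,\mu}=\partial_{j,\mu}M$ is precisely your $\varphi+\chi\gamma_2^5+\psi\gamma_1^5+\omega\gamma_1^5\gamma_2^5$. One small slip: in your formula for $V_{k,\alpha}$ the roles of $X$ and $Y$ are interchanged when $k=2$ relative to the paper's convention \eqref{eq:expansionofpotentials}; the matrix attached to $\gamma_2^5$ in $V_2$ is $X_{2,\nu}$, not $Y_{2,\nu}$. The paper does not explicitly discuss the mass-term rotation you flag; it simply cites the Petrat--Tumulka argument for the conclusion that $f_j$ is gauge-equivalent to $\tilde f_j$.
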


\begin{proof}[Proof of Lemma \ref{thm:lemmacc}:]
Having used Theorem \ref{thm:ccimplies} already and expanded the potentials as in \eqref{eq:expansionofpotentials}, we now evaluate the missing part of the consistency condition:
\begin{equation}
0 \stackrel{!}{=} \left[ V_k, V_j \right] + m_k \left[ \gamma_k^0, V_j \right] - m_j \left[ \gamma_j^0, V_k \right]  -  i \alpha_k^\mu \partial_{k, \mu} V_j  +  i \alpha^{\mu}_j \partial_{j,\mu} V_k
\end{equation}
We have
\begin{align}
m_1 \left[ \gamma_1^0, V_2 \right] & = 2 m_1    \gamma^0_1 \gamma^5_1 V_{25}, \nonumber
\\ m_2 \left[ \gamma_2^0, V_1 \right] & = 2 m_2 \gamma^0_2 \gamma^5_2 V_{15},
\end{align}
and 
\begin{align}
 \left[ V_1, V_2 \right]  = & \left[ V_{11}, \gamma_1^5 \right] V_{25} +  \left[ \gamma_2^5, V_{21} \right] V_{15} +  \left[ V_{15} \gamma_2^5, \gamma^5_1 V_{25}  \right]  \nonumber
 \\  =  & \left( -2\gamma_1^5 \gamma_1^\mu A_\mu - 2 \gamma^\mu_1 B_\mu  \right) V_{25}  \nonumber
 \\ & + \left( 2\gamma_2^5 \gamma_2^\nu E_\nu + 2 \gamma^\nu_2 F_\nu  \right) V_{15}  \nonumber
 \\ & +2 \alpha^\mu_1 \gamma_1^5 X_{1,\mu} \left( \gamma^5_2 \gamma_2^\nu G_\nu + \gamma_2^\nu H_\nu \right) + 2 \alpha^\mu_1 Z_{1, \mu}\left( \gamma^5_2 \gamma_2^\nu G_\nu + \gamma_2^\nu H_\nu \right)  \nonumber \\ & + 2 \gamma_1^\mu \gamma_1^5 C_\mu \left( \gamma^5_2 \alpha^\nu_2 Y_{\nu,2} + \alpha^\nu_2 Z_{2, \nu} \right) - 2\gamma_1^\mu D_\mu \left( \gamma^5_2 \alpha^\nu_2 Y_{\nu,2} + \alpha^\nu_2 Z_{2, \nu} \right).
\end{align}
The derivative terms are
\begin{equation} \label{eq:dervs}
- i \alpha_1^\mu \partial_{1, \mu} V_{21}   - i \alpha_1^\mu \gamma_1^5 \partial_{1, \mu} V_{25}+  i \alpha^{\nu}_2 \partial_{2,\nu} V_{11} +  i \alpha^{\nu}_2  \gamma^5_2 \partial_{2,\nu} V_{15}.
\end{equation}

As the 16 matrices in \eqref{eq:basis} are linearly independent, their tensor products give us $16^2 = 256$ linearly independent matrices that appear in the consistency condition. Their respective prefactors have to vanish separately. This gives the following table, in which every of the $16$ cells stands for $16$ terms (for $\mu,\nu = 0,1,2,3$) that have to vanish.

\vspace*{0.75cm}
\noindent \begin{tabular}{c||C{2.89cm}|C{2.89cm}|C{2.4cm}|C{2.9cm}}
$\otimes$ & $\alpha_2^\nu$ & $\gamma^5_2 \alpha_2^\nu$ & $\gamma_2^\nu$ &  $ \gamma_2^5 \gamma_2^\nu$ \\ \hline \hline
$\alpha_1^\mu$ & $-i \partial_{1, \mu} W_{2, \nu}+ i \partial_{2, \nu} W_{1, \mu}$ & $-i \partial_{1, \mu} X_{2, \nu} + i \partial_{2, \nu} X_{1, \mu}$ & $2F_\nu X_{1, \mu} + 2 H_\nu Z_{1, \mu}- i \partial_{1, \mu} E_\nu  $ & $(2m_2 \delta_{0\nu} + 2 E_\nu) X_{1, \mu}  + 2 G_\nu Z_{1, \mu} - i \partial_{1, \mu} F_\nu$ 
 \\ \hline
$\gamma^5_1 \alpha_1^\mu$ & $-i \partial_{1, \mu} Y_{2, \nu} + i \partial_{2, \nu} Y_{1, \mu}$ & $-i \partial_{1, \mu} Z_{2, \nu} + i \partial_{2, \nu} Z_{1, \mu}$  & $2F_\nu Z_{1,\mu} + 2H_\nu X_{1,\mu}+i \partial_{1, \mu} G_\nu$ & $(2m_2 \delta_{0\nu} + 2 E_\nu) Z_{1, \mu}   + 2 G_\nu X_{1, \mu} + i \partial_{1, \mu} H_\nu$
\\ \hline
$\gamma_1^\mu  $ & $-2B_\mu Y_{2, \nu} - 2 D_\mu Z_{2, \nu}+ i \partial_{2,\nu} A_\mu$ & $-2B_\mu Z_{2, \nu} - 2 D_\mu Y_{2, \nu}- i \partial_{2,\nu} C_\mu$ & $-2B_\mu G_\nu + 2 C_\mu F_\nu$ & $-2 B_\mu H_\nu + 2 E_\nu C_\mu+ 2m_2 C_\mu \delta_{0\nu}$  
  \\ \hline
$\gamma^5_1 \gamma_1^\mu$ &  $-(2m_1 \delta_{0 \mu } +2A_\mu) Y_{2, \nu}- 2 C_\mu Z_{2, \nu}+ i \partial_{2,\nu}  B_\mu$ & $-(2m_1  \delta_{0 \mu} +2A_\mu) Z_{2, \nu} - 2 C_\mu Y_{2, \nu}- i \partial_{2,\nu} D_\mu$ & $-(2m_1  \delta_{\mu 0} +2A_\mu) G_\nu+2 D_\mu F_\nu $ & $-(2 A_\mu + 2m_1\delta_{0 \mu} ) H_\nu +(2m_2 \delta_{0\nu}+ 2 E_\nu) D_\mu$  
\\   \hline
\end{tabular}
\vspace*{0.5cm}

Setting every entry of this table equal to zero gives the required system of equations \eqref{eq:cc1}--\eqref{eq:cc16}.
\\ It remains to show that the potentials can be expanded as in \eqref{eq:joerg}, \eqref{eq:heike}. Let us add up equations \eqref{eq:cc1} to \eqref{eq:cc4} with the respective matrices, factorizing $\alpha_1^\mu \alpha_2^\nu$, which leads to
\begin{align} \label{eq:eichweg}
& - \partial_{1, \mu} W_{2, \nu}+  \partial_{2, \nu} W_{1, \mu} + \gamma^5_2 \left( - \partial_{1, \mu} X_{2, \nu}+  \partial_{2, \nu} X_{1, \mu} \right) \nonumber \\ & + \gamma^5_1 \left( - \partial_{1, \mu} Y_{2, \nu}+  \partial_{2, \nu} Y_{1, \mu} \right) + \gamma^5_1 \gamma^5_2 \left( - \partial_{1, \mu} Z_{2, \nu}+  \partial_{2, \nu} Z_{1, \mu} \right) = 0.
\end{align}
The names we gave to the terms in the potential are suited to make the symmetry of this equation visible. Defining 
\begin{equation}
f_{j,\mu} := W_{j, \mu} + \gamma^5_2 X_{j, \mu} + \gamma^5_1 Y_{j, \mu}  + \gamma^5_1 \gamma^5_2 Z_{j, \mu},
\end{equation}
equation \eqref{eq:eichweg} becomes
\begin{equation} \label{eq:Fschief}
\partial_{1, \mu} f_{2, \nu} = \partial_{2, \nu} f_{1, \mu}.
\end{equation}
Then, we adapt the argument of Petrat and Tumulka \cite[p.~34]{multitimeNoPotentials}: Define 
\begin{equation}
g_{j,\mu\nu} = \partial_{j,\mu}f_{j,\nu} - \partial_{j,\nu}f_{j,\mu}.
\end{equation}
For $i \neq j$, we have
\begin{equation} \label{eq:klaus}
\partial_{i,\lambda}g_{j,\mu\nu} =  \partial_{j,\mu}\partial_{i,\lambda}f_{j,\nu} -   \partial_{j,\nu}\partial_{i,\lambda} f_{j,\mu} = \partial_{j,\mu} \partial_{j,\nu} f_{i,\lambda} - \partial_{j,\nu} \partial_{j,\mu}  f_{i,\lambda}= 0.
\end{equation}
This implies that $g_{j,\mu\nu}$ is a function of $x_j$ only. Define for arbitrary fixed $\tilde{x}_1, \tilde{x}_2$ the function $\tilde{f}_{j,\mu} (x_j):=f_{j,\mu}(x_j, \tilde{x}_i)$ and $h_{j,\mu}(x_1,x_2):=f_{j,\mu}(x_1,x_2)-\tilde{f}_{j,\mu}(x_j)$. Since \eqref{eq:klaus} implies
\begin{equation}
g_{j,\mu\nu} = \partial_{j,\mu}f_{j,\nu} - \partial_{j,\nu}f_{j,\mu} = \partial_{j,\mu}\tilde{f}_{j,\nu} - \partial_{j,\nu}\tilde{f}_{j,\mu},
\end{equation}
we have
\begin{equation}
\partial_{j,\mu}h_{j,\nu} - \partial_{j,\nu}h_{j,\mu} = 0, \quad j=1,2.
\end{equation}
Moreover, eq. \eqref{eq:Fschief} gives us 
\begin{equation}
\partial_{1,\mu}h_{2,\nu} - \partial_{2,\nu}h_{1,\mu} = 0.
\end{equation}
These two equations together form the integrability condition, from which it follows that a self-adjoint matrix-valued function $M(x_1,x_2)$ exists such that $h_{j,\mu} = \partial_{j,\mu} M(x_1,x_2)$, i.e.\
\begin{equation}
f_{j, \mu} (x_1, x_2) =  \partial_{j,\mu} M(x_1,x_2)  + \tilde{f}_{j,\mu} (x_j).
\end{equation}
Therefore, the unitary map $e^{i M(x_1,x_2)}$ maps the potential $f_j$ to the purely external potential $\tilde{f}_j$, which shows that $f_j$ is not interacting according to our definition.
\\ The generalization to the case where the consistency condition only holds on $\mathscr{S}^{(N)}$ works exactly like in \cite[p.~35]{multitimeNoPotentials}. \qed

\end{proof}

\subsection{A consistent example} \label{sec:consistentexample}

As a side remark before we prove Lemma \ref{thm:consistentexample}, note that the connection of the consistent potential with the above basis decomposition is more easily visible if the potential is rewritten as $V_1 = -i \gamma_1^\mu C_\mu \sin (2c_\nu x^\nu)  + \gamma_1^5 \gamma_1^\mu C_\mu \cos (2c_\nu x^\nu)- m_1 \gamma^0_1$.

\begin{proof}[Proof of Lemma \ref{thm:consistentexample}:] \noindent \begin{enumerate}
\item
We have to evaluate the consistency condition
\begin{align*}
\left[ i \alpha_1^\mu \partial_{1,\mu} - m_1 \gamma^0_1 - \gamma_1^\mu C_\mu \exp \left( 2i \gamma^5_1 c_\lambda x^\lambda \right) + m_1 \gamma^0_1, i \alpha_2^\nu \partial_{2,\nu} - m_2 \gamma^0_2 - \gamma_1^5 \alpha_2^\nu c_\nu \right] & = 0
\\ \Longleftrightarrow  - \left[  \gamma_1^\mu C_\mu \exp \left( 2i \gamma^5_1 c_\lambda x^\lambda \right), i \alpha_2^\nu \partial_{2,\nu} \right] + \left[  \gamma_1^\mu C_\mu \exp \left( 2i \gamma^5_1 c_\lambda x^\lambda \right), \gamma_1^5 \alpha_2^\nu c_\nu \right] & = 0
\\ \Longleftrightarrow \gamma_1^\mu C_\mu \alpha_2^\nu \left( i\partial_{2, \nu}  \exp \left( 2i \gamma^5_1 c_\lambda x^\lambda \right) + 2\gamma^5_1 c_\nu \exp \left( 2i \gamma^5_1 c_\lambda x^\lambda \right) \right)  &= 0 ,
\end{align*}
which is indeed true. Note that in the case at hand the consistency condition is satisfied identically, not only applied to certain functions. 
\item 
Now we assume (for a contradiction) that there is a gauge transformation $U(x_1,x_2): \C^K \to \C^K$ that yields non-interacting potentials. Such a map can be written as $U(x_1, x_2) = e^{iM(x_1,x_2)}$ with a self-adjoint $K \times K$-matrix $M$. We define the transformed quantities
\begin{equation}
\tilde{\psi} := U \psi, ~~ \tilde{\gamma^\mu} := U \gamma^\mu U^{\dagger}.
\end{equation}
If $\psi$ is a solution of the system \eqref{eq:intromultsys}, it follows that $\tilde{\psi}$ satisfies
\begin{equation}
(i \tilde{\alpha}^\mu_k \partial_{k, \mu} - \tilde{\gamma^0_k} m_k) \tilde{\psi} = \tilde{V}_k \tilde{\psi} - \tilde{\alpha}^\mu_k (\partial_{k,\mu} \tilde{M}) \tilde{\psi},
\end{equation}
where $\tilde{V}$ and $\tilde{M}$ stand for the same expressions as $V$ and $M$, but with all appearing matrices replaced by the ones with a tilde\footnote{Since the gamma matrices are always only defined up to a similarity transformation, the tildes do not really matter and can basically be omitted. Note that a gauge transformation just refers to a (local) change of coordinates in the spinor space.}. Therefore, the condition that the transformed potential only depends on $x_k$ amounts to the requirement that
\begin{equation}
V_k(x_1,x_2) - \alpha^\mu_k \partial_{k,\mu}M(x_1,x_2)
\end{equation}
is in fact only a matrix-valued function of $x_k$, so its derivative with respect to another coordinate has to vanish. Using that $V_2$ is constant, this implies the following two equations:
\begin{align} \label{eq:mama}
\partial_{1, \lambda} \alpha^\mu_2 \partial_{2,\mu} M(x_1,x_2) & = 0
\\ \partial_{2, \delta} \alpha^\nu_1 \partial_{1,\nu} M(x_1,x_2) & = c_\delta 2i \gamma^5_1 \gamma^\mu_1 C_\mu \exp \left( 2i \gamma^5_1 c_\nu x^\nu \right)  \label{eq:papa}
\end{align}
Now consider the contraction 
\begin{align}
& \alpha^\lambda_1 \alpha^\delta_2 \partial_{1, \lambda}\partial_{2, \delta} M(x_1,x_2) \nonumber
\\ ~=~ & \alpha^\lambda_1 \left( \alpha^\delta_2 \partial_{1, \lambda}\partial_{2, \delta} M(x_1,x_2) \right) = 0 \nonumber
\\ ~=~ & \alpha^\delta_2 \left( \alpha^\lambda_1 \partial_{1, \lambda}\partial_{2, \delta} M(x_1,x_2) \right) = \alpha^\delta_2 c_\delta 2i \gamma^5_1 \gamma^\mu_1 C_\mu \exp \left( 2i \gamma^5_1 c_\nu x^\nu \right) 
\end{align}
where we have used, after different regrouping of the summands, equation \eqref{eq:mama} in the second line and \eqref{eq:papa} in the third line.  This is a contradiction because the $C_\mu, c_\mu$ are not all zero. Hence, a matrix $M$ with the required properties does not exist. We have therefore proven that the potential is not gauge-equivalent to a non-interacting one, so it is interacting. \qed
\end{enumerate}
\end{proof}

\subsection{Classification of consistent potentials}
\label{sec:consistentpotentials}


Instead of proving lemma \ref{thm:potentialsform} directly, we give a slightly stronger reformulation that implies it, but uses the basis decomposition discussed in Section \ref{sec:basisdecomp}. 

\begin{lemma} \label{thm:lol}
Suppose the consistency condition is fulfilled (in the sense of \textbf{(A)} in theorem \ref{thm:ccimplies}) for a two-particle Dirac system \eqref{eq:intromultsys} for which the gauge transformation which makes $W_i,X_i,Y_i,Z_i$ purely external has already taken place. If the potentials are translation invariant, i.e. satisfy
\begin{equation}
V_i (x_1, x_2) = V_i (x_1 + a, x_2 + a) ~~ \forall a \in \R^4,
\end{equation}
then all terms $A_\mu,...,H_\mu$ in the potentials are necessarily of the form 
\begin{equation} \label{eq:formofpots}
C_1 \cdot e^{c_{i,\nu} x^\nu} + C_2 \cdot e^{-c_{i, \nu} x^\nu} 
\end{equation}
for some $C_1, C_2 \in \C$ and $c_i \in \C^4$, where $x = x_1 - x_2$. In the case of $A_0$ and $E_0$, a constant term $- m_1$ resp.\ $-m_2$ is added. 
\end{lemma}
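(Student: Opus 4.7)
\textbf{Proof plan for Lemma~\ref{thm:lol}.} The plan is to exploit translation invariance together with the gauge already performed in Lemma~\ref{thm:lemmacc} to turn the consistency equations \eqref{eq:cc5}--\eqref{eq:cc12} into a linear first-order constant-coefficient PDE system, from which the claimed exponential form is read off via an explicit matrix exponential. First, in the gauge of Lemma~\ref{thm:lemmacc} the blocks $W_{i,\mu}, X_{i,\mu}, Y_{i,\mu}, Z_{i,\mu}$ sit entirely inside the external piece $V_{i,ext}$ and therefore depend only on $x_i$; combined with translation invariance $V_i(x_1,x_2)=V_i(x_1+a,x_2+a)$, any such function of a single variable that is invariant under all $\R^4$-translations must be constant. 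Equations \eqref{eq:cc1}--\eqref{eq:cc4} are then automatically satisfied, and I fix once and for all the constants $y_\nu:=Y_{2,\nu}$, $z_\nu:=Z_{2,\nu}$, $\xi_\mu:=X_{1,\mu}$, $\zeta_\mu:=Z_{1,\mu}$ in $\C$.

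Next I collect $\vec v_\mu(x):=(A_\mu(x)+m_1\delta_{0\mu},\,B_\mu(x),\,C_\mu(x),\,D_\mu(x))^T$ as a function of $x:=x_1-x_2$, using translation invariance so that $\partial_{2,\nu}$ acts as $-\partial/\partial x^\nu$. Equations \eqref{eq:cc5}--\eqref{eq:cc8} then collapse into the constant-coefficient system
\[
\frac{\partial}{\partial x^\nu}\vec v_\mu(x) \;=\; 2i\bigl(y_\nu P_y + z_\nu P_z\bigr)\vec v_\mu(x),
\]
with explicit $4\times 4$ matrices $P_y,P_z$ whose actions on $(\tilde A,B,C,D)^T$ are $(B,\tilde A,-D,-C)^T$ respectively $(D,C,-B,-\tilde A)^T$. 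A direct computation yields the Clifford-like relations $P_y^2=I$, $P_z^2=-I$, $\{P_y,P_z\}=0$, and $P_yP_z$ is an invertible permutation matrix. Consequently $[y_\nu P_y+z_\nu P_z,\,y_\lambda P_y+z_\lambda P_z]=2(y_\nu z_\lambda - y_\lambda z_\nu)\,P_yP_z$, and the integrability $\partial_\nu\partial_\lambda \vec v_\mu = \partial_\lambda \partial_\nu \vec v_\mu$ forces either $\vec v_\mu\equiv 0$ (in which case the claim holds trivially) or $y$ and $z$ to be $\C$-linearly dependent; write $y=\gamma c$, $z=\delta c$ with $c\in\C^4$, $\gamma,\delta\in\C$.

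In the reduced case the system becomes $\partial_\nu \vec v_\mu = 2ic_\nu P\,\vec v_\mu$ with $P:=\gamma P_y+\delta P_z$ satisfying $P^2=(\gamma^2-\delta^2)I$. Setting $\kappa:=\sqrt{\gamma^2-\delta^2}$, the matrix exponential evaluates to
\[
\vec v_\mu(x) \;=\; \exp\!\bigl(2i(c\!\cdot\! x)P\bigr)\vec v_\mu(0) \;=\; \left[\cos\!\bigl(2\kappa(c\!\cdot\! x)\bigr)\, I \;+\; \tfrac{i\sin(2\kappa(c\!\cdot\! x))}{\kappa}\,P\right]\vec v_\mu(0),
\]
so every entry of $\vec v_\mu$ is a $\C$-linear combination of $e^{\pm 2i\kappa(c\cdot x)}$. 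Therefore each of $A_\mu+m_1\delta_{0\mu},\,B_\mu,\,C_\mu,\,D_\mu$ is of the form $C_1 e^{c'\cdot x}+C_2 e^{-c'\cdot x}$ with $c':=2i\kappa c\in\C^4$, producing the constant shift $-m_1\delta_{0\mu}$ on $A_0$. The identical argument applied to \eqref{eq:cc9}--\eqref{eq:cc12}, driven by the constants $\xi,\zeta$ and with $\vec w_\nu:=(E_\nu+m_2\delta_{0\nu},F_\nu,G_\nu,H_\nu)^T$, yields the exponential form for $E_\nu,F_\nu,G_\nu,H_\nu$ with a (possibly different) constant vector $c$ and mass shift $-m_2\delta_{0\nu}$ on $E_0$.

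The main obstacle I anticipate is the degenerate nilpotent case $\kappa=0$ (equivalently $\gamma^2=\delta^2$, $P^2=0$), in which the matrix exponential is only linear rather than genuinely exponential in $(c\!\cdot\! x)$; a strict reading of the target form then requires $P\vec v_\mu(0)=0$, so that $\vec v_\mu$ is constant and the claim holds with $c'=0$. I expect the algebraic coupling relations \eqref{eq:cc13}--\eqref{eq:cc16}, which tie the two subsystems $(A,B,C,D)$ and $(E,F,G,H)$ together and which are otherwise not used for the form, to exclude the genuinely nilpotent situation in the interacting regime. Verifying this, together with the routine book-keeping when one of the two quartets vanishes identically, is the only non-mechanical remainder of the proof.
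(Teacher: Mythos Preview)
Your plan is essentially the paper's own argument, repackaged. The paper also first uses the gauge plus translation invariance to freeze $W_i,X_i,Y_i,Z_i$ to constants, then feeds \eqref{eq:cc5}--\eqref{eq:cc8} back into one another to obtain a constant-coefficient second-order equation for each scalar (it displays the computation for $B_\mu$ and $D_\mu$), and the Schwarz symmetry of $\partial_{2,\nu}\partial_{2,\lambda}$ forces $Y_{2,\nu}Z_{2,\lambda}-Y_{2,\lambda}Z_{2,\nu}=0$, exactly your linear-dependence condition $y\wedge z=0$. After that the paper solves the decoupled second-order ODE and reads off the exponentials. Your first-order $4\times 4$ matrix system with the Clifford-type relations $P_y^2=I$, $P_z^2=-I$, $\{P_y,P_z\}=0$ and the explicit matrix exponential is just a cleaner way of doing the same elimination; mathematically the two routes coincide.

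On the nilpotent edge case $\kappa=0$: this is a fair worry, but be aware that the paper's proof does not address it either. It simply writes the general solution of $\partial_{2,\nu}\partial_{2,\lambda}B_\mu = 4(Z_{2,\lambda}Z_{2,\nu}-Y_{2,\lambda}Y_{2,\nu})B_\mu$ as a sum of two exponentials without discussing what happens when the coefficient vanishes (where the general solution would be affine in $x$ rather than exponential). More importantly, the paper explicitly does \emph{not} invoke \eqref{eq:cc13}--\eqref{eq:cc16} for this lemma; it remarks at the end that those additional constraints ``were not considered so far'' and that it will ``not elucidate on that because we only want to show that the form \eqref{eq:formofpots} is \textit{necessary}.'' So your expectation that \eqref{eq:cc13}--\eqref{eq:cc16} rescue the degenerate case is not how the paper proceeds, and you should not rely on it; at the paper's level of rigor the affine/nilpotent sub-case is simply left aside, since the downstream use of the lemma (ruling out Lorentz invariance) is unaffected.
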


\begin{proof}[Proof of Lemma \ref{thm:lol}]
After the gauge transformation, $W_i, X_i, Y_i$ and $Z_i$ are functions of $x_i$ only. If we assume that the potentials are translation invariant, it follows that these functions have to be constants. Therefore, we can derive second order differential equations for the functions $A$ to $H$. We show the steps for $B_\mu$ and $D_\mu$, the other cases are analogous. Since $V_k \in C^1(\Omega,\C^{K \times K})$, every scalar function $A_\mu, B_\mu,...,H_\mu,W_{i,\mu},...,Z_{i,\mu}$ in the potentials has to be continuously differentiable. Equations \eqref{eq:cc5} to \eqref{eq:cc12} imply that the terms $A$ to $H$ are in fact two times continuously differentiable, because the first derivatives are expressible as a sum of continuously differentiable functions.
\\ Therefore, we may differentiate equation \eqref{eq:cc6} once more. Inserting \eqref{eq:cc5} and \eqref{eq:cc7}, we obtain
\begin{equation}
\frac{1}{4} \partial_{2, \nu} \partial_{2, \lambda} B_\mu = (Z_{2,\lambda} Z_{2,\nu} - Y_{2,\lambda} Y_{2,\nu}) B_\mu +(Y_{2,\nu}Z_{2,\lambda} - Y_{2,\lambda}Z_{2,\nu}) D_\mu.
\end{equation}
Similarly for $D_\mu$:
\begin{equation}
\frac{1}{4} \partial_{2, \nu} \partial_{2, \lambda} D_\mu = (Z_{2,\lambda} Z_{2,\nu} - Y_{2,\lambda} Y_{2,\nu}) D_\mu +(Y_{2,\nu}Z_{2,\lambda} - Y_{2,\lambda}Z_{2,\nu}) B_\mu
\end{equation}
Although the derivatives $\partial_{2, \nu}$ and $\partial_{2, \lambda}$ need to commute, the right hand side of these equations is apparently not invariant under exchange of $\nu$ and $\lambda$. This implies that 
\begin{equation}
B_\mu = D_\mu =0 ~ \vee ~ Y_{2,\nu}Z_{2,\lambda} - Y_{2,\lambda}Z_{2,\nu} = 0.
\end{equation}
In the first case, we are already done (the potentials are of the desired form, with the constants being equal to zero). So we go on with the second case, where the differential equation becomes
\begin{equation}
 \partial_{2, \nu} \partial_{2, \lambda} B_\mu = 4(Z_{2,\lambda} Z_{2,\nu} - Y_{2,\lambda} Y_{2,\nu}) B_\mu,
\end{equation}
and the same for $D_\mu$. Using $Y_{2,\nu}Z_{2,\lambda} = Y_{2,\lambda}Z_{2,\nu}$, it can be rewritten as 
\begin{equation}
 \partial_{2, \nu} \partial_{2, \lambda} B_\mu = 2 \sqrt{Z_{2,\nu}^2 - Y_{2,\nu}^2} \cdot 2 \sqrt{Z_{2,\lambda}^2 - Y_{2,\lambda}^2} \cdot B_\mu.
\end{equation}
The square root is also defined for negative radicand as $\sqrt{x} := i \sqrt{|x|}$. This has the general solution
\begin{equation}
B_\mu = C_\mu^+ \exp \left(2 \sqrt{Z_{2,\alpha}^2 - Y_{2,\alpha}^2} x_2^\alpha \right) + C_\mu^- \exp \left(-2 \sqrt{Z_{2,\alpha}^2 - Y_{2,\alpha}^2} x_2^\alpha \right) ,
\end{equation}
with free constants $C_\mu^{\pm}$ that may depend on $x_1$. 
Since the potential must be translation independent, the constants must be such that $B_\mu$ has the form \eqref{eq:formofpots}.
\\ We thus have the required form for $B$ and $D$, and the other terms work analogously. In the case of $A$ and $E$, one should derive the differential equations for the functions $(m_1  \delta_{0 \mu} +A_\mu)$ and $(m_2  \delta_{0 \nu } +E_\nu)$ instead. Then, the consistency condition poses several additional constraints, eqs. \eqref{eq:cc13}--\eqref{eq:cc16} amongst others, that were not considered so far. But we will not elucidate on that because we only want to show that the form \eqref{eq:formofpots} is \textit{necessary}. \qed 

\end{proof}

\section*{Acknowledgements}
The authors are grateful for fruitful discussions with Matthias Lienert and S\"oren Petrat and also want to thank Matthias Lienert for helpful comments on the manuscript. This work was partly funded by the Elite Network of Bavaria through the Junior Research Group ‘Interaction between Light and Matter’.

\bibliography{Konsistenz}

\end{document}